\documentclass[11pt,a4paper]{article}

\usepackage{amssymb}
\usepackage{graphicx}
\usepackage{bm}

\usepackage{mathrsfs}
\usepackage{cite}
\usepackage{amsmath}

\usepackage{amsmath}
\usepackage{amsthm}

\usepackage{xcolor}

\usepackage{indentfirst}

\usepackage{array}
\usepackage{makecell}

\usepackage{textcomp}
\usepackage{float}

\usepackage{amsthm}

\begin{document}

\title{An Algorithm for Generating All Berglund–Hübsch-Type Calabi–Yau Orbifolds and Their Mirrors. Classification of Maximal Admissible Groups.}

\author{S.S.Aleshin  $\vphantom{\Big(}$
\medskip\\
${}${\small{\em Institute for Information Transmission Problems RAS,}}\\ {\small{ \em 127051, Moscow, Russia,}}\\ {\small{\em aless2001@mail.ru}}
}

\maketitle

\begin{abstract}
In this paper, we present an algorithm for constructing all Calabi–Yau orbifolds and their mirror pairs associated with an arbitrary Berglund–Hübsch-type model, and prove its correctness. In particular, the proposed method yields explicit generators for the groups defining the corresponding orbifolds. The method for classifying maximal admissible groups is proposed. We illustrate the application of the classification approach to an arbitrary Fermat-type Calabi–Yau manifold. In addition, as an illustrative example, we classify all maximal admissible groups of Fermat-type Calabi–Yau threefolds.

\end{abstract}


\section{Introduction}

The structure of the groups defining orbifolds and their mirrors encodes topological information about the corresponding spaces, including their Hodge numbers, Euler characteristics, and other topological invariants \cite{Roan,Vafa:1989xc,Aleshin:2026lep}. As is known, from a phenomenological point of view, it is important to be able to construct Calabi–Yau manifolds with a topology that determines the number of generations of quarks and leptons to be three. Some models with three generations are known \cite{Klemm:1992bx, Candelas:1989hd, Shishanin:2021thg}. Nevertheless, these lists may be far from complete, due to the possible existence of orbifolds associated with the models presented in these works, as well as with other models. The proposed algorithm enables us to systematically enumerate all orbifolds and their mirror pairs associated with each model and, consequently, to obtain a complete classification of those satisfying the prescribed properties. 

In \cite{Aleshin:2026sbv}, an algorithm for generating Berglund–Hübsch-type orbifolds of Calabi–Yau manifolds and their mirrors was developed. The approach developed in that work is based on both polynomial deformations and admissible groups. In the present work, a modified algorithm based exclusively on a group-theoretic approach is proposed. One of the advantages of the proposed method is the explicit construction of generators for the groups defining the orbifolds and their mirrors.

The paper is organized as follows. In Chapter 2, we formulate and prove the correctness of a modified algorithm for constructing orbifolds and their mirror pairs. We also present a method for obtaining generators of the groups defining the orbifolds.

Chapter 3 presents an algorithm for obtaining representatives of all conjugacy classes of elements of the quotient group $H/J_A$, where $H$ is an arbitrary subgroup satisfying $J_A \leq H \leq G_{\mathrm{adm}}^{\mathrm{max}}$, and $J_A$ is the symmetry group of the weighted projective space.

In Chapter 4, we present a method for classifying all maximally admissible groups and their quotients by the symmetry group of the weighted projective space for Fermat, Loop, Chain polynomials, and their combinations. Namely, we obtain a decomposition of the maximal admissible group into invariant factors (the Frobenius representation of the group), as well as a method for passing from elements of the group in such a decomposition to the maximal admissible group defined by Krawitz. This decomposition makes it possible to substantially simplify the first step of the algorithm described in Chapter 2.

In Chapter 5, we illustrate the application of the classification approach developed in Chapter 4 to an arbitrary Fermat-type Calabi–Yau manifold. In addition, as an illustrative example, we classify all maximal admissible groups of Fermat-type Calabi–Yau threefolds in the Appendix.


We consider the class of Calabi–Yau manifolds of Berglund–Hübsch type. These manifolds are defined as hypersurfaces in weighted projective spaces by polynomials satisfying the following conditions:

\begin{eqnarray}\label{P}
&&\mathbb{P}_{{\bf k}}^{n-1} =
\Big\{(x_1,\ldots,x_n)\in \mathbb{C}^n\setminus \{0\}
\,\,\Big|\,\,
x_i \sim \lambda^{k_i}x_i,\quad
\forall\, \lambda\in \mathbb{C}
\Big\}\\
&&W_0 := \sum_{i=1}^n\prod_{j=1}^n x_j^{A_{ij}}=0,\quad A\in \mathrm{Mat}_{n\times n}(\mathbb{Z}),\quad \det A\neq 0.\\
&&\sum_{j=1}^n A_{ij} k_j = d \label{quasi_homogeneous_polynomial}\\
&&d = \sum_{i=1}^n k_i\label{c1_0}.
\end{eqnarray}

\noindent
where $W_0$ has no critical points outside the origin and the last condition is equivalent to the vanishing of the first Chern class of the hypersurface. It is known \cite{Kreuzer:1992bi} that $W_0$ is invertible and non-degenerate if and only if the polynomial $W_0$ can be represented as a sum of atomic-type polynomials

\begin{eqnarray}
&&W_{\text{Fermat}}=\sum_{i=1}^n x_i^{A_{ii}}\nonumber\\
&&W_{\text{Chain}}=\sum_{i=1}^{n-1} x_i^{A_{ii}}x_{i+1}
+ x_n^{A_{nn}}\\
&&W_{\text{Loop}}=\sum_{i=1}^{n-1} x_i^{A_{ii}}x_{i+1}
+ x_n^{A_{nn}}x_1. \nonumber
\end{eqnarray}

\noindent
For this class of Calabi–Yau manifolds, Berglund and Hübsch \cite{Berglund:1991pp} proposed a mirror construction in which the mirror manifolds arise as particular orbifolds of projective spaces associated with the corresponding $A^T$ polynomials. This construction was subsequently generalized by Krawitz \cite{Krawitz}.

We proceed to the construction of Calabi–Yau orbifolds. For this purpose, we define the maximal group of diagonal symmetries.

\begin{eqnarray}\label{Aut_0}
&&Aut(A):=\Big\{(\alpha_1,...,\alpha_n)\in(\mathbb{C^*})^n\,\,\Big|\,\, W_0(\alpha_1\cdot x_1,...,\alpha_n\cdot x_n)= W_0(x_1,...,x_n)\Big\}
\end{eqnarray}

\noindent
It can be shown that the invariance condition for the polynomial $W_0$ in (\ref{Aut_0}) holds for any polynomial of one of the atomic types (Fermat, Loop, Chain), as well as for any polynomial that is a combination of such types, it is true that 
$|\alpha_i|=1$.  Then $\alpha_i = e^{2\pi i \theta_i}$, and to satisfy the condition of invariance of the polynomial $W_0$ in (\ref{Aut_0}), it is required that

\begin{eqnarray}
\sum_{j=1}^n A_{ij}\theta_j \in \mathbb{Z}^n.
\end{eqnarray}

\noindent
Since the matrix $A$ is invertible, the generators of the symmetry group of the polynomial $W_0$ are given by

\begin{eqnarray}\label{gen_B}
\Big(e^{2\pi i B_{1j}},..., e^{2\pi i B_{nj}}\Big),\quad j\in\overline{1,n}
\end{eqnarray}

\noindent
where $B_{ij}:= (A^{-1})_{ij}$, then, for an arbitrary $(\alpha_1,...,\alpha_n)\in Aut(A)$ there exist integers $m_j$,  $j\in\overline{1,5}$, such that

\begin{eqnarray}\label{alpha_i}
\alpha_i = exp\Big (2\pi i \sum_{j=1}^n B_{ij}m_j\Big),\,\quad \forall\,i\in \overline{1,n}
\end{eqnarray}

\noindent
From the definition (\ref{Aut_0}) it follows that the element $(\alpha_1,\ldots,\alpha_n)\in Aut(A)$ acts as

\begin{eqnarray}
x_i \mapsto \alpha_i\cdot x_i = \exp\Big (2\pi i \sum_{j=1}^nB_{ij}m_j\Big)x_i
\end{eqnarray}

\noindent
Let us consider in $Aut(A)$ the subgroup that preserves $\prod_{i=1}^nx_i$
\begin{eqnarray}\label{SL_pred_final}
G_{\text{adm}}^{\text{max}}=\Big\{p\in Aut(A)\,\,\Big |\,\, p\cdot \prod_{i=1}^nx_i=\prod_{i=1}^nx_i \Big\}
\end{eqnarray}

\noindent
This group is referred to as the maximal admissible group. It follows from (\ref{alpha_i}) and the definition (\ref{SL_pred_final}) that the maximal admissible group can be represented as

\begin{eqnarray}\label{SL_pred_final_new}
G_{\text{adm}}^{\text{max}} = \Big\{\left(e^{2\pi i\sum_{j=1}^nB_{1j}m_j},\ldots,e^{2\pi i\sum_{j=1}^nB_{nj}m_j}\right)\in (\mathbb{C}^*)^n \,\,\Big |\,\,\sum_{i,j=1}^n B_{ij}m_j\in\mathbb{Z}\Big\}.
\end{eqnarray}

\noindent
The group $G_{\mathrm{adm}}^{\mathrm{max}}$ contains the quantum group, i.e., the symmetry group of the weighted projective space, as a subgroup.

\begin{eqnarray}
J_A := \left\langle \Big(e^{\frac{2\pi i}{d}k_1},\dots,e^{\frac{2\pi i}{d}k_n}\Big)\right\rangle
\end{eqnarray}

\noindent
that is, the group generated by the element $(e^{\frac{2\pi i}{d}k_1},\dots,e^{\frac{2\pi i}{d}k_n})\in G_{\text{adm}}^{\text{max}}$. 
For the orbifold construction, we will use admissible groups 
$G$ (subgroups of $G_{\text{adm}}^{\text{max}}$) for which
$J_A\subseteq G\subseteq G_{\text{adm}}^{\text{max}}$ is satisfied.

\section{Algorithm for finding Berglund--Hubsch type Calabi--Yau orbifolds and their mirror pairs}
\hspace*{\parindent}

In this section, we propose a modification of the algorithm introduced in \cite{Aleshin:2026sbv} for constructing Berglund--Hubsch-type orbifolds and their mirror pairs. Unlike the previous method, this algorithm is based solely on a group approach and does not use deformations. Also, we will propose a method for finding generators of the groups defining the orbifolds.



In the first step we find all  elements of the maximal admissible groups $G_{\mathrm{adm}}^{\max}$ and $G_{\mathrm{adm}}^{\max, T}$. Further, the index "T" will mean that the group is a mirror group in relation to the one without this index. To compute the elements of the group $G_{\mathrm{adm}}^{\max, T}$, we must take the polynomial determined by the transposed matrix $A^T$. Chapter 4 provides an algebraic description of the structure of the maximal admissible group. Here, we give a computational approach to the problem. To determine all elements of the group $G_{\mathrm{adm}}^{\max}$, it suffices, by (\ref{SL_pred_final_new}), to find the solutions of

 \begin{eqnarray}\label{prod_x_i}
\sum_{i,j=1}^n B_{ij}m_j\in\mathbb{Z}.
\end{eqnarray}

\noindent
We shall determine the solution by considering all admissible values of $m_j$. To this end, we derive the constraints imposed on $m_j$.
Observe that $B_{ij}\in\mathbb{Q}$, thus $B_{ij}=c_{ij}/d_{ij}$ for some $c_{ij}\in\mathbb{Z},$ $d_{ij}\in\mathbb{Z}\setminus\{0\}$.
To determine the maximal possible value of an arbitrary fixed $m_p$, we consider

\begin{equation}
\sum_{\substack{i,j=1\\ j\neq p}}^n B_{ij}m_j
+
\left(\sum_{i=1}^n \frac{c_{ip}}{d_{ip}}\right)m_p
\in \mathbb{Z}.
\end{equation}

\noindent
It follows that

\begin{equation}
0 \leq m_p < \operatorname{lcm}(d_{1p},\ldots,d_{np}),
\qquad p\in\overline{1,n}.
\end{equation}










\noindent
At the second stage,

\begin{enumerate}
    \item For each element $g=(e^{2\pi i g_1},\ldots,e^{2\pi i g_n}) \in G_{\mathrm{adm}}^{\max}$, we determine all corresponding mirror elements $q$ of the group $G_{\mathrm{adm}}^{\max,T}$ satisfying the Krawitz mirror image condition

\begin{eqnarray}\label{mirror_condition}
&R_g=\left\{q \in G_{\mathrm{adm}}^{\max, T}\;\middle|\;\sum_{i,j=1}^{n}g_i A_{ij} q_j\in \mathbb{Z}
\right\}.
\end{eqnarray}
Note that $R_g$ is a group. Enumerate all unique groups $R_g$. We obtain the set $\left\{R_i\right\}$.

\item For each $R_i$, define the associated set
\begin{eqnarray}
G_i=\left\{g\in G_{\mathrm{adm}}^{\max}\mid R_g=R_i\right\}.
\end{eqnarray}
This yields pairs $(G_i, R_i)$, where both $G_i$ and $R_i$ consist of elements of the groups $G_{\mathrm{adm}}^{\max}$ and $G_{\mathrm{adm}}^{\max, T}$, respectively.
\end{enumerate}

\newpage

\newtheorem{proposition}{Proposition}
\begin{proposition}\label{Ri_is_cyclic_group}
\leavevmode\par\vspace{0.5\baselineskip}
The group $R_g^T$ is cyclic and is generated by an arbitrary element of the set $G_g$.

\end{proposition}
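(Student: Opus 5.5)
The plan is to read $R_g^T$ as the dual (annihilator) of $R_g$ inside $G_{\mathrm{adm}}^{\max}$:
\[
R_g^T=\Big\{h\in G_{\mathrm{adm}}^{\max}\;\Big|\;\sum_{i,j}h_iA_{ij}q_j\in\mathbb{Z}\ \ \forall q\in R_g\Big\}.
\]
The goal is to show first that $R_g^T=\langle g\rangle$, and then that $\langle h\rangle=\langle g\rangle$ for every $h\in G_g$. The tool throughout is the Krawitz pairing
\[
\langle h,q\rangle=\sum_{i,j}h_iA_{ij}q_j \bmod \mathbb{Z}
\]
between $Aut(A)$ and $Aut(A^T)$. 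Writing $h=Bm$ and $q=B^Tm'$ (by (\ref{alpha_i}) applied to $A$ and to $A^T$), one gets $\langle h,q\rangle=m^TBm' \bmod\mathbb{Z}$. This is the standard perfect pairing $\mathbb{Z}^n/A\mathbb{Z}^n\times\mathbb{Z}^n/A^T\mathbb{Z}^n\to\mathbb{Q}/\mathbb{Z}$. Hence, for every subgroup $K$, $\mathrm{Ann}(\mathrm{Ann}(K))=K$ and $|K|\,|\mathrm{Ann}(K)|=|\det A|$.

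\emph{Step 1.} Since $\langle g,q\rangle$ is additive in $g$, I have $R_g=\mathrm{Ann}(\langle g\rangle)\cap G_{\mathrm{adm}}^{\max,T}$. The constraint $\sum_j q_j\in\mathbb{Z}$ defining $G_{\mathrm{adm}}^{\max,T}$ is exactly $\langle j,q\rangle\in\mathbb{Z}$ for the exponential-grading element $j=(1/d_{i})$-type generator, i.e. the element with $jA=(1,\dots,1)$. I will check this against (\ref{quasi_homogeneous_polynomial}), taking care to use the appropriate weight vector. This gives $G_{\mathrm{adm}}^{\max,T}=\mathrm{Ann}(J)$, where $J$ is the cyclic group generated by that element. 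Then $\mathrm{Ann}(X)\cap\mathrm{Ann}(Y)=\mathrm{Ann}(X+Y)$ together with the bi-annihilator property yields $\mathrm{Ann}(R_g)=\langle g\rangle+J$. Intersecting with $G_{\mathrm{adm}}^{\max}$ gives $R_g^T=\langle g\rangle+J$ in the relevant ambient group.

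\emph{Step 2.} Next I identify $G_g$. If $h\in G_g$, then $R_h=R_g$, so $\langle h\rangle+J=\langle g\rangle+J$ by Step 1. Hence $G_g$ consists of the elements of $R_g^T$ whose cyclic span together with $J$ is all of $R_g^T$.

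\emph{Step 3, the main obstacle.} I must show that $J\subseteq\langle h\rangle$ for each such $h$, so that $R_g^T=\langle h\rangle$ is cyclic and generated by $h$. The first thing to try is to locate where $J$ actually lives. The grading element of the \emph{transposed} side pairs trivially with all of $G_{\mathrm{adm}}^{\max}$, and the analogous statement holds with $A$ and $A^T$ swapped. If the ambient group in which annihilators are computed is taken modulo that element, then $J$ becomes trivial in the quotient. In that case Step 1 gives $R_g^T=\langle g\rangle$ directly, and Step 2 gives $\langle h\rangle=\langle g\rangle$ for all $h\in G_g$, which is exactly the statement.

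So the key check is the computation of $\mathrm{Ann}(G_{\mathrm{adm}}^{\max,T})$ inside $Aut(A)$. I need to verify that it is exactly the grading subgroup, and that this subgroup is absorbed by the conventions implicit in (\ref{mirror_condition}). If instead it is nontrivial and equal to $J_A$, then the honest conclusion of this plan is only $R_g^T=\langle g\rangle J_A$. The cyclicity claim would then require showing $J_A\subseteq\langle h\rangle$ for every $h\in G_g$. I would test that claim on the Fermat quintic before trying to prove it in general.
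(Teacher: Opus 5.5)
Steps 1 and 2 are correct, but the proposal stops exactly where the proof has to be finished, and Step 3 cannot be finished because the inclusion it needs is false.

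\textbf{Step 1 settles the question, against the statement.} Write the pairing as $\sum_{i,j}g_iA_{ji}q_j$. This is the form that is well defined on $Aut(A)\times Aut(A^T)$ with the indexing of $W_0$; for Fermat it coincides with (\ref{mirror_condition}). Then (\ref{quasi_homogeneous_polynomial}) gives $A(k/d)=\mathbf{1}$, so the pairing of $g_0$ with $q$ is $\sum_i q_i$. Hence the element in your Step 1 is $g_0$ itself, and $G_{\mathrm{adm}}^{\max,T}=\mathrm{Ann}(J_A)$.

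So $R_g=\mathrm{Ann}(\langle g\rangle J_A)$. Whether the dual of $R_g$ is taken in $G_{\mathrm{adm}}^{\max}$ or in all of $Aut(A)$, you get $R_g^T=\langle g\rangle J_A=\langle g,g_0\rangle$. Nothing in (\ref{mirror_condition}) absorbs $J_A$. The transposed grading element $J_{A^T}$, which pairs trivially with $G_{\mathrm{adm}}^{\max}$, already lies in every $R_g$, and quotienting by it leaves $J_A$ untouched.

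\textbf{The needed inclusion fails.} You are therefore in your ``if instead'' branch: $R_g^T$ is generated by $h\in G_g$ exactly when $J_A\subseteq\langle h\rangle$. The quintic test you suggest refutes this.
\begin{itemize}
\item For $A=5I$ and $g=(\frac{1}{5},\frac{4}{5},0,0,0)$, one gets $R_g=\{q\in G_{\mathrm{adm}}^{\max,T}\mid q_1\equiv q_2 \bmod \mathbb{Z}\}$, of order $125$.
\item Then $R_g^T=\langle g,g_0\rangle\cong\mathbb{Z}_5\oplus\mathbb{Z}_5$, which is not cyclic.
\item Meanwhile $G_g=\langle g,g_0\rangle\setminus J_A$ consists of elements of order $5$, none of which generates $R_g^T$.
\end{itemize}
More simply, $R_e=G_{\mathrm{adm}}^{\max,T}$, so $G_e=J_A\ni e$, but $R_e^T=J_A\neq\langle e\rangle$.

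What your argument does establish is the corrected statement: $R_g^T=\langle g,g_0\rangle$, and $R_g^T/J_A$ is cyclic, generated by the class of any $h\in G_g$.

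\textbf{Where the paper's proof goes wrong.} It reaches $\langle t\rangle$ by combining $\langle t\rangle^T=R_t$ (dual taken inside $G_{\mathrm{adm}}^{\max,T}$) with Krawitz's identity $(G^T)^T=G$.
\begin{itemize}
\item That identity belongs to Krawitz's unrestricted duality between subgroups of $Aut(A)$ and of $Aut(A^T)$. For the restricted duality used to define $R_g$, it becomes $(H^T)^T=HJ_A$ for $H\le G_{\mathrm{adm}}^{\max}$, which is precisely your Step 1.
\item Under the unrestricted duality, $\langle J_A\rangle^T=G_{\mathrm{adm}}^{\max,T}$. The equality $\langle t\rangle^T=R_t$ would then force $\langle t\rangle^T\subseteq\langle J_A\rangle^T$, i.e.\ $J_A\subseteq\langle t\rangle$.
\end{itemize}
In either reading the paper tacitly assumes the same hypothesis $J_A\subseteq\langle t\rangle$ that you isolated in Step 3. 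You located the flaw correctly, but neither your plan nor the paper's proof can establish the proposition as stated. Proposition~\ref{Proof_of_the_algorithm} does add $g_0$ to the generators of $H$, as the corrected version requires.
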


\vspace{0.0001cm}

\renewcommand{\proofname}{Proof}

\begin{proof}
\leavevmode\par\vspace{0.5\baselineskip}
Let $Q := (R_g)^T$. Consider the cyclic group $\langle t\rangle$ generated by an arbitrary element $t\in G_g$. Then

\begin{eqnarray}
\langle t\rangle ^T
&=&
\left\{
h\in G_{\mathrm{adm}}^{\max,T} \;\middle|\;
\sum_{i,j=1}^{n} f_iA_{ij}h_j\in\mathbb{Z},
\ \forall\, f\in\langle t\rangle
\right\}
\nonumber\\[1ex]
&=&
\left\{
h\in G_{\mathrm{adm}}^{\max,T} \;\middle|\;
\sum_{i,j=1}^n t_iA_{ij}h_j\in\mathbb{Z}
\right\}=R_t=R_g,
\end{eqnarray}

\noindent 
where the last equality follows from $t\in G_g$. By Lemma 3.3 of Kravitz \cite{Krawitz}, for group $G$ we have $G = \left(G^{T}\right)^{T}$. Hence,

\begin{eqnarray}
Q=(Q^T)^T
=\bigl[(R_g^T)^T\bigr]^T= R_g^T
=(\langle t\rangle^T)^T=\langle t\rangle.
\end{eqnarray}

\end{proof}

\noindent 
It follows that any element of the set $G_i$ may be chosen as a generator of a group $R_i^T$, and this group will be mirror to the group $R_i$. We are now ready to describe any subgroup defining an orbifold in terms of $R_i$ - groups by means of the following statement.

\begin{proposition}\label{Proof_of_the_algorithm}
\phantom{test}

\begin{enumerate}
\renewcommand{\labelenumi}{(\arabic{enumi})}
\item For any subgroup $H$, $J_A\subset H\subset G_{\text{adm}}^{\text{max}}$ with generators $h_{i_j}\in G_i$, $i_j\in I_H$ which we will denote as $H=\langle g_0, h_{i_j}\mid i_j\in I_H\rangle$ and $J_A=\langle g_0 \rangle$, is satisfied

\begin{eqnarray}\label{group_description_in_terms_of_R_sets_intersections}
H^T = \bigcap_{i\in \overline{I_H}}R_{h_i} = \bigcap _{i\in \overline{I_H}} R_i
\end{eqnarray}

where
$g_0=\left(e^{2\pi i \frac{k_1}{d}},\dots,e^{2\pi i \frac{k_n}{d}}\right)$ is the symmetry generator of the weighted projective space and $h_i$ is any fixed element from $G_{i}$, $\overline{I_H}$ is the number of unique indices $i$ included in the set $I_H$.

\item Conversely, for any intersection $\bigcap_{i\in I} R_i$ there exists a subgroup $H$ such that

\begin{eqnarray}\label{intersection_as_a_group}
\bigcap_{i\in I} R_i = H^T.
\end{eqnarray}

 To find the generators of $H$, it is enough to select one arbitrary element $g_i$ from each $G_i$ $i\in I$, then their independent subset (together with $g_0$) will be the set of generators of group~$H$

\begin{equation}\label{constructive_production_of_group_H_generators}
\forall (g_i)_{i \in I} \in \prod_{i \in I} G_i,\qquad
H = \langle g_0, g_i \mid i \in I \rangle.
\end{equation}

\end{enumerate}

\end{proposition}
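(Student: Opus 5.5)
The plan is to work throughout with the Krawitz duality $H\mapsto H^T=\{q\in G_{\mathrm{adm}}^{\max,T}\mid \sum_{i,j} h_iA_{ij}q_j\in\mathbb{Z}\ \forall h\in H\}$. I will use three facts. The first is the involutivity $(H^T)^T=H$ (Lemma 3.3 of Krawitz, already invoked in the proof of Proposition \ref{Ri_is_cyclic_group}). The second is that the pairing $\langle g,q\rangle:=\sum_{i,j}g_iA_{ij}q_j \bmod \mathbb{Z}$ is additive in $g$: exponents add under group multiplication, and $A$ is an integer matrix, so the pairing is well defined modulo $\mathbb{Z}$. The third is that $J_A^T=G_{\mathrm{adm}}^{\max,T}$. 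For this last fact I would check directly that $g_0$ pairs integrally with every $q\in G_{\mathrm{adm}}^{\max,T}$. By the quasi-homogeneity condition (\ref{quasi_homogeneous_polynomial}),
\begin{equation*}
\sum_{i,j}\tfrac{k_i}{d}A_{ij}q_j=\sum_j q_j\,\tfrac{1}{d}\Big(\sum_i A_{ij}k_i\Big).
\end{equation*}
Care is needed because (\ref{quasi_homogeneous_polynomial}) is stated for $A$ acting on the left. So I would use the weights of the transposed model, or equivalently note that $q=B^T m$, which gives $\sum_j q_j A_{ij}=m_i$. Then the pairing becomes $\sum_i (k_i/d)\, m_i$, and it must be compared with the condition $\sum_{i} (B^T m)_i\in\mathbb{Z}$ that defines $G_{\mathrm{adm}}^{\max,T}$. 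Since $B\mathbf{1}=\mathbf{k}/d$, these two expressions coincide, so the pairing is an integer.

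For part (1), additivity gives, for any generating set, $\langle S\rangle^T=\bigcap_{s\in S}\{q\mid \langle s,q\rangle\in\mathbb{Z}\}$. Indeed, a $q$ that pairs integrally with every generator pairs integrally with every product of generators and their inverses. Applying this to $H=\langle g_0,h_{i_j}\rangle$ yields
\begin{equation*}
H^T=J_A^T\cap\bigcap_j R_{h_{i_j}}=\bigcap_j R_{h_{i_j}}.
\end{equation*}
Several generators may lie in the same $G_i$, and by the definition of $G_i$ each of them has $R_{h}=R_i$. Repeated indices therefore contribute the same set, and the intersection collapses to one over the distinct indices $i$. Because $R_h=R_i$ for any $h\in G_i$, the choice of representative $h_i$ is irrelevant. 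This gives both equalities in (\ref{group_description_in_terms_of_R_sets_intersections}).

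For part (2), I take an arbitrary choice $g_i\in G_i$ for $i\in I$ and set $H:=\langle g_0,g_i\mid i\in I\rangle$. Then $J_A\subseteq H\subseteq G_{\mathrm{adm}}^{\max}$ holds automatically, and part (1) gives $H^T=\bigcap_{i\in I}R_i$, which is (\ref{intersection_as_a_group}). Since $H^T$ does not depend on the choice of the $g_i$, involutivity shows that $H=(H^T)^T=(\bigcap_{i\in I}R_i)^T$ is also independent of that choice. This proves (\ref{constructive_production_of_group_H_generators}) for every tuple in $\prod_{i\in I}G_i$. Discarding generators that lie in the subgroup generated by the others does not change $H$, so any independent subset that still generates (together with $g_0$) serves as a generating set. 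As a side remark, the $G_i$ partition $G_{\mathrm{adm}}^{\max}$, so every generator of any admissible $H$ lies in some $G_i$. Consequently the hypothesis of part (1) imposes no restriction, and every admissible $H$ arises in this way.

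I expect the main obstacle to be the step $J_A^T=G_{\mathrm{adm}}^{\max,T}$, that is, showing that $g_0$ imposes no extra constraint, because it depends on correctly matching the quasi-homogeneity and Calabi–Yau conditions with the transposed data. If the direct computation becomes messy, my fallback is to invoke Krawitz's general statement that $J_A^T=SL\cap \mathrm{Aut}(A^T)$, which is exactly $G_{\mathrm{adm}}^{\max,T}$. A secondary point is well-definedness of the pairing modulo $\mathbb{Z}$ under the choice of exponents $g_i$. This follows because shifting $g_i$ by an integer changes $\sum_j A_{ij}q_j$ only by integer multiples, and for $q\in\mathrm{Aut}(A^T)$ each sum $\sum_j A_{ij}q_j$ is an integer.
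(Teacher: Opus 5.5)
Your route is the paper's: you write $H^T$ as the intersection of the sets $R_s$ over a generating set, collapse repeated indices via $R_h=R_i$ for $h\in G_i$, and obtain (2) by choosing representatives and applying (1). You are more careful than the paper in two places. Its step $\bigcap_{h\in H}R_h=R_{\bigcup h_{i_j}}$ silently drops $g_0$, and so tacitly uses $R_{g_0}=G^{\max,T}_{\mathrm{adm}}$. Your involutivity argument for independence of the choice in (2) is also a real addition.

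The step that fails as written is your verification of $J_A^T=G^{\max,T}_{\mathrm{adm}}$, and your well-definedness remark fails for the same reason. From $q=B^Tm$ you get $A^Tq=m$, i.e.\ $\sum_j A_{ji}q_j=m_i$, not $\sum_j A_{ij}q_j=m_i$. Likewise, for $q\in\mathrm{Aut}(A^T)$ it is $\sum_jA_{ji}q_j$, not $\sum_jA_{ij}q_j$, that is an integer.

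With the pairing taken literally as in (\ref{mirror_condition}), i.e.\ $\sum_{i,j}g_iA_{ij}q_j$, neither claim holds for non-symmetric $A$. Take the chain $x_1^3x_2+x_2^2x_3+x_3^2$ with $\mathbf{k}=(1,1,2)$, $d=4$. The element $q=(\tfrac13,\tfrac13,\tfrac13)$ lies in $G^{\max,T}_{\mathrm{adm}}$, yet it pairs with $g_0=(\tfrac14,\tfrac14,\tfrac12)$ to $\tfrac{11}{12}$. Moreover, shifting $g_1$ by $1$ changes the pairing by $\tfrac43$, so the literal pairing is not even well defined modulo $\mathbb{Z}$.

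As written, (\ref{mirror_condition}) interchanges the roles of $g$ and $q$ relative to Krawitz's pairing; the two coincide for symmetric $A$, e.g.\ Fermat type. You should state explicitly that you use
\[
\langle g,q\rangle=\sum_{i,j}q_iA_{ij}g_j=\sum_i q_i(Ag)_i .
\]
In this convention, well-definedness follows from $Ag\in\mathbb{Z}^n$ and $A^Tq\in\mathbb{Z}^n$. Quasi-homogeneity (\ref{quasi_homogeneous_polynomial}) gives $Ag_0=\mathbf{1}$ directly, so $\langle g_0,q\rangle=\sum_i q_i\in\mathbb{Z}$ for every $q\in G^{\max,T}_{\mathrm{adm}}$. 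No detour through $B^Tm$ or the transposed weights is needed. Your fallback citation of Krawitz's $J_A^T=SL\cap\mathrm{Aut}(A^T)$ is also valid only in this convention. With this correction, the rest of your argument goes through unchanged.
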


\vspace{0.0001cm}

\renewcommand{\proofname}{Proof}

\begin{proof}
\mbox{}

\begin{description}
\item[{\bf (1)}]

\begin{eqnarray}
 H^T =\bigcap_{h\in H}R_h=R_{\bigcup_{i_j\in I_H}h_{i_j}}=\bigcap_{i_j\in I_H}R_{h_{i_j}}=\bigcap_{i\in \overline{I_H}}R_{h_{i}}=\bigcap _{i\in \overline{I_H}} R_i
\end{eqnarray}

\noindent
where $R_A$ is a set of elements from $G_{\mathrm{adm}}^{\max, T}$ such that their pairing with any of the elements from set A satisfies condition (\ref{mirror_condition}).

\item[{\bf (2)}]

From (\ref{constructive_production_of_group_H_generators}) and (\ref{group_description_in_terms_of_R_sets_intersections}) it follows directly that there is a group $H$ such that (\ref{intersection_as_a_group}) is satisfied.



\end{description}
\end{proof}

Proposition~\ref{Proof_of_the_algorithm} establishes a bijective correspondence between orbifolds and intersections of the groups $R_i$. Moreover, the above proposition provides an algorithm for obtaining generators of the group $H$ defining the orbifold. More precisely, by part~(1), each such group $H$ corresponds to an intersection of groups $R_i$ by relation (\ref{group_description_in_terms_of_R_sets_intersections}), while part~(2) describes how to determine generators for a given intersection group.

At the third step, we determine all possible intersections of the groups $R_i$. To determine all types of intersections, we proceed inductively from $(N-1)$-fold intersections to $N$-fold intersections, starting from the single intersections $R_i$, $i \in \overline{1,N}$. To obtain new intersection types among the $N$-fold intersections, we take each intersection type arising from the $(N-1)$-fold intersections and intersect it successively with all sets $R_i$, $i \in \overline{1,N}$, recording the newly obtained intersection types. If, starting from some step, no new intersection types appear when passing to the next multiplicity, the procedure terminates.

From Proposition~\ref{Proof_of_the_algorithm} it follows that to determine a generating set for the mirror group corresponding to the group given by the intersection $\cap_{i\in I} R_i$, it suffices to choose one arbitrary element from each set $G_i$, where $i\in I$. The resulting collection of elements forms a generating set for $(\cap_{i\in I} R_i)^T$:

\begin{equation}
\left( \bigcap_{i \in I} R_i \right)^{T}
=
\left\langle g_i \in G_i \mid i \in I \right\rangle .
\end{equation}


In the next step we repeat Steps 2 and 3 for the elements of $G_{\mathrm{adm}}^{\max,T}$.

As a result, we obtain two lists of groups defining all orbifolds. Using the mirror condition (\ref{mirror_condition}) we pair the resulting groups into mirror pairs under the requirement that the mirror group is of maximal order. This produces the complete collection of mirror pairs.

\section{Algorithm for computing $H/J_A$ for an arbitrary Berglund–Hubsch-type model}
\hspace*{\parindent}

Consider an arbitrary subgroup \(H \leq G_{\mathrm{adm}}^{\max}\).
The group $H$ can be quotiented by the symmetry group of the weighted projective space $J_A$. 
To factorize the group $H$ by the symmetry group of the weighted projective space $J_A$, we proceed as follows. We consider the elements $f_i \in H$ one by one. If the element $f_i$ does not belong to the set

\begin{equation}
C_{i-1}:=\left\{f_j g_0^p\;\middle|\;j \in \overline{1,i-1},\;p \in\overline{0,d}
\right\},
\end{equation}

\noindent 
then we record $f_i$ as a representative of a coset in $H/J_A$ and define the set

\begin{equation}
C_i:=C_{i-1}\cup\left\{f_i g_0^p\;\middle|\;p \in \overline{0,d}\right\}.
\end{equation}

\noindent 
We continue this procedure until all elements of $G_{\mathrm{adm}}^{\max}$ have been exhausted. At the end of the algorithm, we obtain a complete set of representatives of the cosets in $H/J_A$.

\section{Classification of maximal admissible groups}
\hspace*{\parindent}


Let us consider condition (\ref{prod_x_i}). In the notation of Chapter 2, this condition can be written as

\begin{eqnarray}\label{Diophantine_eq_G_adm_max}
&&\sum_{i,j=1}^n B_{ij}m_j=\sum_{i=1}^n\frac{1}{\operatorname{lcm}_i}\sum_{j=1}^n F_{ij}m_j=\nonumber\\
&&=\frac{1}{L}\left(\sum_{i=1}^n\frac{L}{\operatorname{lcm}_i}F_{i1}\right)m_1+\ldots + \frac{1}{L}\left(\sum_{i=1}^n\frac{L}{\operatorname{lcm}_i}F_{in}\right)m_n \in\mathbb{Z},\\
&&F_{ij}:=c_{ij}\frac{\operatorname{lcm}_{i}}{d_{ij}}\in\mathbb{Z} \quad \forall\,i\in \overline{1,n},\nonumber
\end{eqnarray}

\noindent
where least common multiple $\operatorname{lcm}_{i}=\operatorname{lcm}(d_{i1},\ldots d_{in})$ and $L$ is the least common multiple $L=\operatorname{lcm}(\operatorname{lcm}_1,\ldots,\operatorname{lcm}_n)$. Since $\operatorname{lcm}_i \mid L$ and $F_{ij}\in\mathbb{Z}$, then

\begin{eqnarray}\label{G_adm_max_new}
&&G_{\text{adm}}^{\text{max}} \cong \Big\{(m_1,...,m_n)\in \mathbb{Z}^n \,\,\Big |\,\,\frac{1}{L}\sum_{j=1}^n G_jm_j\in\mathbb{Z}\Big\}\Big /T_1\mathbb{Z}\times...\times T_n\mathbb{Z},\\
&&G_j:=\sum_{i=1}^n
\frac{L}{\operatorname{lcm}_i}F_{ij}, \quad T_j:=\frac{\operatorname{lcm}(G_j,L)}{G_j},\quad j\in\overline{1,n}.
\end{eqnarray}

\noindent
The condition in group (\ref{G_adm_max_new})
\begin{eqnarray}
\sum_{j=1}^n G_jm_j \equiv 0 \pmod{L}
\end{eqnarray}

\noindent
can be represented  as a homogeneous Diophantine equation in the variables $m,\, m_i,$ $i\in\overline{1,n}$
\begin{eqnarray}\label{fin_Diophantine_eq_G_adm_max_1}
\sum_{j=1}^n G_jm_j  + Lm = 0, \,\, m_i,m\in \mathbb{Z}
\end{eqnarray}

\noindent
Next, we find a fundamental system of solutions to equation (\ref{fin_Diophantine_eq_G_adm_max_1}) 

\begin{eqnarray}\label{fundamental_system_G_adm_max}
v^j = \begin{pmatrix}
m_1^j \\
\vdots \\
m_n^j
\end{pmatrix},\,\, j\in\overline{1,n}.
\end{eqnarray}

\noindent
where $j\in\overline{1,n}$ denotes the index of the fundamental solution.
It follows that the maximal admissible group can be represented as a quotient of a free abelian group.

\begin{eqnarray}\label{lattice_before_transition_new}
G_{\text{adm}}^{\text{max}} \cong \bigoplus_{i=1}^n \mathbb{Z}v^i \Big/ L_{R}
\end{eqnarray}

\noindent
where the submodule $L_{R}=span_{\mathbb{Z}}(R)\subset \mathbb{Z}^n$ is the $\mathbb{Z}$-linear span of the column vectors  (generating relations) $R\subset \mathbb{Z}^n$.

\begin{eqnarray}\label{}
R = \begin{pmatrix}
T_{1} & 0      & \cdots & 0 \\
0      & T_{2} & \cdots & 0 \\
\vdots & \vdots & \ddots & \vdots \\
0      & 0      & \cdots & T_{n}
\end{pmatrix}
\end{eqnarray}

We now determine which element of the group in (\ref{G_adm_max_new}) induces a symmetry of the weighted projective space. Since a generator of the symmetry group of the weighted projective space, viewed as a subgroup of (\ref{SL_pred_final_new}), is of the form $\left(e^{2\pi i \frac{k_1}{d}},\ldots,e^{2\pi i \frac{k_n}{d}}\right),$
and since

\begin{eqnarray}
\sum_{j=1}^n A_{ij}k_j=d\mathbf{1}_i \Rightarrow \sum_{i,j=1}^n B_{fi}A_{ij}k_j = \sum_{i=1}^n B_{fi}d\mathbf{1}_i \Rightarrow \sum_{j=1}^n\delta_{fj}k_j=d\sum_{i=1}^nB_{fi}\nonumber
\end{eqnarray}

\begin{eqnarray}\label{generator_J_A_new}
 \frac{k_f}{d} = \sum_{i=1}^n B_{fi},
\end{eqnarray}

\noindent
where $\mathbf{1}$ denotes the all-ones column vector. It follows from (\ref{generator_J_A_new}) that the generator of the symmetry group of the weighted projective space, viewed as a subgroup of (\ref{G_adm_max_new}), is given by the all-ones column vector. Taking into account the symmetry of the weighted projective space, we obtain

\begin{eqnarray}
\begin{pmatrix}
1 \\
\vdots \\
1
\end{pmatrix} \sim \begin{pmatrix}
0 \\
\vdots \\
0
\end{pmatrix}\nonumber
\end{eqnarray}

\noindent
then the matrix of relations will take the form

\begin{eqnarray}\label{}
Q = \begin{pmatrix}
T_{1} & 0      & \cdots & 0 & 1 \\
0      & T_{2} & \cdots & 0 & 1 \\
\vdots & \vdots & \ddots & \vdots & \vdots\\
0      & 0      & \cdots & T_{n} & 1
\end{pmatrix}
\end{eqnarray}

\noindent
then

\begin{eqnarray}\label{lattice_proj_before_transition_new}
G_{\text{adm}}^{\text{max}}/J_A \cong \bigoplus_{i=1}^n \mathbb{Z}v^i \Big/ L_{Q}
\end{eqnarray}

\noindent
Since any column vectors in $R$ and $Q$ is a solution of (\ref{fin_Diophantine_eq_G_adm_max_1}) then we can find the coordinates of the column vectors of $R$ and $Q$ in the basis of the fundamental system of solutions (\ref{fundamental_system_G_adm_max}), that is, the matrices $R_v$ and $Q_v$

\begin{eqnarray}\label{}
\sum_{i=1}^n v_{j}{}^{i}(R_v)_{ir} = R_{jr},\quad \sum_{i=1}^n v_{j}{}^{i}(Q_v)_{ir} = Q_{jr},
\end{eqnarray}

\noindent
where the first and second indices of the matrices indicate the row and column, respectively. Then relations (\ref{lattice_before_transition_new}) and (\ref{lattice_proj_before_transition_new}) take the form

\begin{eqnarray}
G_{\text{adm}}^{\text{max}} \cong \mathbb{Z}^n \Big/ L_{R_v},\quad G_{\text{adm}}^{\text{max}}/J_A \cong \mathbb{Z}^n \Big/ L_{Q_v}.
\end{eqnarray}

\noindent
as $n = \operatorname{rk}(R_v) = \operatorname{rk}(L_{R_v})$ and $n=\operatorname{rk}(Q_v)=\operatorname{rk}(L_{Q_v})$ then the ranks of the free summands in the direct sum decomposition of the groups into cyclic groups are $\operatorname{rk}\!\left(
G_{\mathrm{adm}}^{\max}/\operatorname{Tors}(G_{\mathrm{adm}}^{\max})\right)= n - \operatorname{rk}(R_v) = 0$ and $\operatorname{rk}\!\left(
\left( G_{\mathrm{adm}}^{\max}/J_A\right)/\operatorname{Tors}\!\left(G_{\mathrm{adm}}^{\max}/J_A\right)\right)= n - \operatorname{rk}(Q_v) = 0$. Hence, there are no free summands in the decomposition of $G_{\text{adm}}^{\text{max}}$ and $G_{\text{adm}}^{\text{max}}/J_A $.

Thus, the problem reduces to studying the structure of the $\mathbb{Z}$-module $\mathbb{Z}^n / L_P$, where $P=R_v$ or $P=Q_v$. We can now apply the theorem on invariant factors:

\newtheorem{theorem}{Theorem}
\begin{theorem}\label{invariant_factors_0}
\leavevmode\par\vspace{0.5\baselineskip}
Let $M$ be a free module of rank $n$ over a principal ideal domain $K$, and let $N\subset M$ be a submodule. Then there exists a basis $(e_1,\ldots,e_n)$ of $M$ such that $\lambda_1 e_1,\ldots,\lambda_m e_m$
is a basis of $N$, where $m=\operatorname{rk}N$, and the elements \(\lambda_1,\ldots,\lambda_m\) are the invariant factors of $N$ in $M$, satisfying $\lambda_i\mid\lambda_j$ for $i<j$. (We shall refer to the bases $\{e_i\}$ and $\{\lambda_i e_i\}$ as compatible bases.)
\end{theorem}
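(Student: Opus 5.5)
We prove the theorem by induction on the rank $n$ of $M$, using the standard extremal argument for a principal ideal domain $K$. If $N=0$ there is nothing to prove, so assume $N\neq 0$. Since $N$ is a submodule of a free module of finite rank over a PID, $N$ is itself free of rank $m\le n$. (This also follows by induction, projecting onto the last coordinate and using that ideals of $K$ are principal.)

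The first step is to produce $\lambda_1$ and $e_1$. For every linear functional $\varphi\in\operatorname{Hom}_K(M,K)$, the image $\varphi(N)$ is an ideal of $K$, say $(a_\varphi)$. Because $K$ is Noetherian, the family of these ideals has a maximal element; choose $\varphi_1$ with $\varphi_1(N)=(\lambda_1)$ maximal, and pick $y\in N$ with $\varphi_1(y)=\lambda_1$.

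The key claim, and the main obstacle, is that $\lambda_1$ divides $\psi(y)$ for every $\psi\in\operatorname{Hom}_K(M,K)$. To prove it, let $\delta=\gcd(\lambda_1,\psi(y))$ and write $\delta=r\lambda_1+s\psi(y)$. Then the functional $r\varphi_1+s\psi$ sends $y$ to $\delta$, so its image of $N$ contains $(\delta)\supseteq(\lambda_1)$. Maximality forces $(\delta)=(\lambda_1)$, so $\lambda_1\mid\psi(y)$.

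Applying the claim to the coordinate functionals with respect to any basis of $M$, every coordinate of $y$ is divisible by $\lambda_1$. Hence $y=\lambda_1 e_1$ for some $e_1\in M$, and $\varphi_1(e_1)=1$.

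This gives the splittings $M=Ke_1\oplus\ker\varphi_1$ and $N=K\lambda_1e_1\oplus(N\cap\ker\varphi_1)$. Indeed, for $x\in N$ we have $\varphi_1(x)=c\lambda_1$, and $x-cy\in N\cap\ker\varphi_1$. The module $\ker\varphi_1$ is free of rank $n-1$, being a direct summand of $M$ and hence projective over a PID.

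Next we apply the induction hypothesis to $N':=N\cap\ker\varphi_1\subset M':=\ker\varphi_1$. This gives a basis $e_2,\dots,e_n$ of $M'$ such that $\lambda_2e_2,\dots,\lambda_me_m$ is a basis of $N'$ with $\lambda_2\mid\cdots\mid\lambda_m$. Then $e_1,\dots,e_n$ is a basis of $M$ and $\lambda_1e_1,\dots,\lambda_me_m$ is a basis of $N$. The rank of $N'$ is $m-1$, because the splitting of $N$ shows $\operatorname{rk}N=1+\operatorname{rk}N'$.

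It remains to check $\lambda_1\mid\lambda_2$. Consider the functional $\varphi$ with $\varphi(e_1)=\varphi(e_2)=1$ and $\varphi(e_i)=0$ for $i\ge3$. Then $\varphi(\lambda_1e_1+\lambda_2e_2)=\lambda_1+\lambda_2$, so $\varphi(N)\supseteq(\lambda_1,\lambda_2)=(\gcd(\lambda_1,\lambda_2))\supseteq(\lambda_1)$. Maximality of $(\lambda_1)$ gives $(\gcd(\lambda_1,\lambda_2))=(\lambda_1)$, that is, $\lambda_1\mid\lambda_2$.

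Finally, the ideals $(\lambda_i)$ do not depend on the choice of compatible bases, so the $\lambda_i$ are indeed the invariant factors of $N$ in $M$. This is the usual uniqueness statement: $\lambda_1\cdots\lambda_k$ generates the ideal spanned by the $k\times k$ minors of any matrix of generators of $N$. In the paper's setting we take $K=\mathbb{Z}$, $M=\mathbb{Z}^n$ and $N=L_{R_v}$ or $L_{Q_v}$. The theorem then yields $\mathbb{Z}^n/L_P\cong\bigoplus_i\mathbb{Z}/\lambda_i\mathbb{Z}$, computed in practice by the Smith normal form of $P$.
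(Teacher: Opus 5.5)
The paper gives no proof of Theorem~\ref{invariant_factors_0}. It quotes the result as a textbook fact, referring to Gorodentsev, and then obtains compatible bases computationally from the Smith normal form of Theorem~\ref{Smith_normal_form_new}, i.e.\ by reducing a relation matrix with invertible row and column operations.

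Your proof is correct. It is the other standard argument, based on a functional $\varphi_1$ for which $\varphi_1(N)$ is maximal among the ideals $\varphi(N)$. The B\'ezout step shows that $\lambda_1$ divides $\psi(y)$ for every $\psi$. This yields $y=\lambda_1e_1$ with $\varphi_1(e_1)=1$, then the splittings of $M$ and $N$, the induction, and finally $\lambda_1\mid\lambda_2$, again by maximality.

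Compared with the matrix route, yours is shorter, coordinate-free and valid verbatim over any PID. The divisibility chain comes directly from maximality rather than from gcd bookkeeping during elimination. On the other hand, it is non-constructive: the extremal $\varphi_1$ is produced by Noetherianity, not by a procedure. So it does not supply what the paper actually uses downstream, namely explicit change-of-basis matrices $U_P^{-1}$ entering $\varphi_1^{-1}$ and $(\varphi_2\circ\pi)^{-1}$. Uniqueness of the $\lambda_i$ is handled the same way on either route, via gcds of $k\times k$ minors, so importing it as you do is fine.

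Two small points need tightening.
\begin{enumerate}
\item State that $N\neq0$ forces $\lambda_1\neq0$: otherwise maximality would give $\varphi(N)=0$ for every $\varphi$, including the coordinate functionals. You need this because you cancel $\lambda_1$ in $\lambda_1\varphi_1(e_1)=\lambda_1$.
\item In the final step, the inclusion $\varphi(N)\supseteq(\lambda_1,\lambda_2)$ follows from $\varphi(\lambda_1e_1)=\lambda_1$ and $\varphi(\lambda_2e_2)=\lambda_2$ separately, since both vectors lie in $N$. The value on the sum alone only gives $\lambda_1+\lambda_2\in\varphi(N)$.
\end{enumerate}
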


\noindent
By Theorem \ref{invariant_factors_0}, applied to the inclusion $L_P\subset \mathbb{Z}^n$ of $\mathbb{Z}$-modules, the quotient group $\mathbb{Z}^n/L_P$ admits a decomposition into invariant factors (Frobenius representation of a group)

\begin{eqnarray}\label{P_and_group_decomposition_new}
\mathbb{Z}^n\Big /L_P \cong \mathbb{Z}/(\lambda_1)\oplus...\oplus \mathbb{Z}(\lambda_r).
\end{eqnarray}

\noindent
where $r=rk(L_P)=rk(P)=n$ and $\mathbb{Z}/(1)=\{0\}$. To determine the invariant factors, we apply the Smith normal form algorithm to the relation matrix $P$

\begin{theorem}\label{Smith_normal_form_new}
\leavevmode\par\vspace{0.5\baselineskip}
\begin{enumerate}
\item For any matrix $P\in Mat_{n\times m}(K)$ over a principal ideal domain $K$, there exist invertible matrices $U_P\in Mat_{n\times n}(K)$ 
and $F_P\in Mat_{m\times m}(K)$ such that

\begin{eqnarray}
D_P = U_PPF_P
\end{eqnarray}

\noindent
where $(D_P)_{ij}=0$ for $i\neq j$, and $(D_P)_{ii}\mid(D_P)_{jj}$ for $i<j$.

\item If the sublattice $L_P \subset \mathbb{Z}^n$ is generated by the columns of the matrix $P$, then the compatible bases of $\mathbb{Z}^n$ and $L_P$ are given, respectively, by the columns of the matrix $U_P^{-1}$ and the vectors $(D_P)_{ii}[U_P^{-1}]_{*i}$, $(D_P)_{ii}\neq 0$ ($[U_P^{-1}]_{*i}$ denotes the $i$-th column of the matrix $U_P^{-1}$).

\end{enumerate}
\end{theorem}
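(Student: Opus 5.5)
We prove part (1), the existence of the diagonal form, by the usual elimination argument over the PID $K$, and then read off part (2) by a change of basis. For $P\neq 0$ we induct on $n+m$. Among all matrices equivalent to $P$, that is, of the form $UPF$ with $U,F$ invertible, pick one whose entry in position $(1,1)$ is nonzero and generates the smallest possible ideal. For $K=\mathbb{Z}$ this just means smallest absolute value. For a general PID we use that $K$ is Noetherian, so every ascending chain of ideals stabilises, and therefore such a minimal choice exists.

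The key claim is that this pivot $a=p_{11}$ divides every entry of the matrix. We first treat entries in the first column. Let $b=p_{i1}$ and $g=\gcd(a,b)=sa+tb$. Then
\[
\begin{pmatrix} s & t\\ -b/g & a/g\end{pmatrix}
\]
has determinant $1$, and applying it to rows $1$ and $i$ replaces the pivot by $g$. By minimality we get $(g)=(a)$, so $a\mid b$. The same move applied to columns handles the first row.

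Now that $a$ divides the first row and first column, elementary row and column operations clear both of them except the pivot. Suppose some remaining entry $c$ is not divisible by $a$. Adding its row to row $1$ puts $c$ into the first row without changing the pivot, and the previous step then produces a pivot generating a strictly larger ideal. This contradicts minimality.

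The matrix now has the form $\operatorname{diag}(a,P')$, and $a$ divides every entry of $P'$. We apply induction to $P'$. Divisibility by $a$ survives the invertible transformations applied to $P'$, because the entries of $U'P'F'$ are $K$-combinations of entries of $P'$. This gives $(D_P)_{11}\mid(D_P)_{22}\mid\cdots$. Collecting all the operations into $U_P$ and $F_P$ gives $D_P=U_PPF_P$.

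The main obstacle is the step that produces a gcd pivot when $K$ is not Euclidean. The Bézout matrix above removes the need for a division algorithm, and Noetherianity guarantees that the minimal pivot exists. For $\mathbb{Z}$, which is the case used in this paper, one can instead run ordinary Euclidean division with remainder on the pivot.

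For part (2), write $P=U_P^{-1}D_PF_P^{-1}$. Since $F_P$ is invertible over $K$, the columns of $P$ and the columns of $PF_P=U_P^{-1}D_P$ span the same submodule $L_P$. The $i$-th column of $U_P^{-1}D_P$ is $(D_P)_{ii}[U_P^{-1}]_{*i}$, and it is zero whenever $i>\min(n,m)$ or $(D_P)_{ii}=0$. Because $U_P^{-1}$ is invertible, its columns form a basis of $K^n$. The nonzero vectors $(D_P)_{ii}[U_P^{-1}]_{*i}$ span $L_P$, and they are linearly independent, being nonzero multiples of distinct basis vectors in a torsion-free module. They are therefore a basis of $L_P$, which is compatible in the sense of Theorem~\ref{invariant_factors_0}. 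Together with the divisibility chain this identifies the $(D_P)_{ii}$ as the invariant factors, and it gives the decomposition (\ref{P_and_group_decomposition_new}).
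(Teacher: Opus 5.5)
The paper gives no proof of this theorem. It quotes it as standard and refers to Gorodentsev's textbook. Your argument is the usual self-contained proof, and it is correct: Bézout-matrix elimination with a Noetherian choice of pivot for part (1), then reading off the compatible bases from $PF_P=U_P^{-1}D_P$ for part (2).

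One wording fix is needed. The pivot should be chosen so that $(p_{11})$ is \emph{maximal} under inclusion among the ideals generated by nonzero $(1,1)$-entries of matrices equivalent to $P$. Equivalently, $p_{11}$ is minimal with respect to divisibility. As written, ``the smallest possible ideal'' says the opposite. Maximality is what the ascending chain condition actually provides, and it is what ``smallest absolute value'' means over $\mathbb{Z}$. It is also exactly what your two contradiction steps use: $(g)\supseteq(a)$ forcing $(g)=(a)$, and the ``strictly larger ideal'' step.

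Finally, calling the $(D_P)_{ii}$ \emph{the} invariant factors relies on the uniqueness of invariant factors. You take this from Theorem~\ref{invariant_factors_0} rather than prove it. That is acceptable here, since the paper also treats that theorem as known.
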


\noindent
It follows from Theorem \ref{Smith_normal_form_new} that the decomposition (\ref{P_and_group_decomposition_new}) can be obtained by reducing the matrix $P$ to its Smith normal form, since $\lambda_i=(D_P)_{ii}$, $(D_P)_{ii}\neq 0$ (for further details on Theorems 1 and 2, see, for example, \cite{Gorodentsev}). It follows from the above that there exist isomorphisms $\varphi_1$ and~$\varphi_2$.

\begin{eqnarray}\label{Smith1_new}
&&\varphi_1:\; G_{\text{adm}}^{\text{max}} \xrightarrow{\cong} \bigoplus_{i=1}^n \mathbb{Z}/(\mu_i),\qquad\,\,\, \mu_l\mid\mu_p\,\,\text{for}\,\, l<p\\ 
\label{Smith2_new}
&&\varphi_2:\;G_{\text{adm}}^{\text{max}}/J_A \xrightarrow{\cong} \bigoplus_{j=1}^n \mathbb{Z}/(\nu_j),\quad  \nu_s\mid\nu_c\,\,\text{for}\,\, s<c.
\end{eqnarray}

\noindent
where $\mu_i=(D_{R_v})_{ii}$, $\nu_j=(D_{Q_v})_{jj}$ ($(D_{R_v})_{ii}\neq 0$, $(D_{Q_v})_{jj}\neq 0$) and  some of the direct summands in the decompositions may be trivial.

Having obtained the invariant factor decompositions of the groups $G_{\mathrm{adm}}^{\max}$ and $G_{\mathrm{adm}}^{\max}/J_A$, we can now establish a correspondence between the elements of (\ref{G_adm_max_new}) and the elements of the groups on the right-hand sides of the isomorphisms (\ref{Smith1_new}) and (\ref{Smith2_new}). For definiteness, assume that \(\mu_1=\cdots=\mu_q=1\) and \(\mu_{q+1}\mid\cdots\mid\mu_n\) in (\ref{Smith1_new}), and that \(\nu_1=\cdots=\nu_p=1\) and \(\nu_{p+1}\mid\cdots\mid\nu_n\) in~(\ref{Smith2_new}).

Let $g=\bigl([y_{q+1}]_{\mu_{q+1}},\ldots,[y_n]_{\mu_n}\bigr)
\in \operatorname{Im}(\varphi_1)$, then it follows from the construction that

\begin{eqnarray}\label{varphi_1_-1}
\varphi_1^{-1}(g)=\left[VU_{R_v}^{-1}y\right]\in G_{\mathrm{adm}}^{\max}
\end{eqnarray}

\noindent
where $y=(0,\ldots,0,y_{q+1},\ldots,y_n)^T$ is $n$-dimensional column vector and the square brackets denote an equivalence class
\begin{eqnarray}
[a]=[b] \quad\Longleftrightarrow\quad a-b\in\mathbb{Z}\begin{pmatrix}
T_1 \\
\vdots \\
T_n
\end{pmatrix}.
\end{eqnarray}

If $h=([h_{p+1}]_{\nu_{p+1}},\ldots,[h_n]_{\nu_n})\in \operatorname{Im}(\varphi_2)$, consider the quotient homomorphism
\begin{equation}\label{s}
\pi:\; G_{\mathrm{adm}}^{\max} \to G_{\mathrm{adm}}^{\max}/J_A.
\end{equation}

\noindent
Then
\begin{equation}\label{varphi_2_circ_pi_-1}
(\varphi_2\circ\pi)^{-1}(h)=\left\{
\left.
[VU_{Q_v}^{-1}z + s\mathbf{1}]\in G_{\mathrm{adm}}^{\max}
\;\right|\;
0\leq s < \operatorname{lcm}(T_1,\ldots,T_n),\;s\in\mathbb{Z}
\right\},
\end{equation}

\noindent
where $\mathbf1=(1,\ldots,1)^T$ and $z=(0,\ldots,0,h_{p+1},\ldots,h_n)^T$ are $n$-dimensional column vectors.

Conversely, let $f=[(w_1,\ldots,w_n)]\in G_{\mathrm{adm}}^{\max}$, where \(G_{\mathrm{adm}}^{\max}\) is defined in (\ref{G_adm_max_new}) then
\begin{align}
c &:= \left(VU_P^{-1}\right)^{-1}w, \\ \label{varphi1_f}
\varphi_1(f)
&= \bigl([c_{q+1}]_{\mu_{q+1}},\ldots,[c_n]_{\mu_n}\bigr), \\\label{varphi2_pi_f}
(\varphi_2\circ\pi)(f)
&= \bigl([c_{p+1}]_{\nu_{p+1}},\ldots,[c_n]_{\nu_n}\bigr),
\end{align}

\noindent
where $w:=(w_1,\ldots,w_n)$, and $P=R_v$ or $P=Q_v$, depending on whether we consider (\ref{varphi1_f}) or (\ref{varphi2_pi_f}), respectively.

Thus, in order to construct the groups $G_{\mathrm{adm}}^{\max}$ and $G_{\mathrm{adm}}^{\max}/J_A$, it suffices to determine a fundamental system of solutions \(V\) and reduce the matrices $R_v$ and $Q_v$ to Smith normal form. Using (\ref{varphi_1_-1}) and (\ref{varphi_2_circ_pi_-1}), we will be able to recover the actions of the elements of $\operatorname{Im}(\varphi_i)$, $i=1,2$, on the polynomial. This approach considerably simplifies the computations required for the first step of the algorithm proposed in Chapter 2.

\section{Example: Classification of maximal admissible groups for Fermat-type Calabi-Yau}
\hspace*{\parindent}

Consider a Fermat-type polynomial

\begin{eqnarray}
W_0 = \sum_{i=1}^m x_i^{A_{ii}}
\end{eqnarray}

\noindent
From (\ref{gen_B}) it follows that the generators of the group 
$Aut(A)$ for Fermat-type models have the form

\begin{eqnarray}
 \Big(e^{2\pi i \frac{1}{A_{11}}\delta_{1j}},..., e^{2\pi i \frac{1}{A_{mm}}\delta_{mj}}\Big),\quad j\in\overline{1,m}
\end{eqnarray}

\noindent
Then the maximal admissible group can be represented as

\begin{eqnarray}\label{SL_pred_final_new_new}
&&G_{\text{adm}}^{\text{max}} =\Big\{\prod_{j=1}^m\Big(e^{2\pi i \frac{1}{A_{11}}\delta_{1j}},...,e^{2\pi i \frac{1}{A_{mm}}\delta_{mj}}\Big)^{n_j}\in(\mathbb{C}^*)^m\,\,\Big |\,\, \sum_{j=1}^m \frac{1}{A_{jj}}n_j\in\mathbb{Z}\Big\}\cong \nonumber\\
&& \cong \Big\{(n_1,...,n_m)\in \mathbb{Z}^m \,\,\Big |\,\,\sum_{j=1}^m \frac{1}{A_{jj}}n_j\in\mathbb{Z}\Big\}\Big /A_{11}\mathbb{Z}\times...\times A_{mm}\mathbb{Z}.
\end{eqnarray}

\noindent
The condition on the group in (\ref{SL_pred_final_new_new}) can be expressed as

\begin{eqnarray}\label{Diophantine_equation_0}
\sum_{i=1}^m a_in_i  \equiv 0 \pmod{b}
\end{eqnarray}

\noindent
where $b=\operatorname{lcm}(A_{11},...,A_{mm})$ and $a_i=b/A_{ii}$. Condition (\ref{Diophantine_equation_0}) can be represented  as a homogeneous Diophantine equation in the variables $n,\, n_i,$ $i\in\overline{1,m}$

\begin{eqnarray}\label{Diophantine_equation_1}
\sum_{i=1}^m a_in_i  + bn = 0, \,\, n_i,n\in \mathbb{Z}
\end{eqnarray}

\noindent
Next, we find a fundamental system of solutions to equation (\ref{Diophantine_equation_1}) 

\begin{eqnarray}\label{fundamental_system}
v^j = \begin{pmatrix}
n_1^j \\
\vdots \\
n_m^j
\end{pmatrix},\,\, j\in\overline{1,m}.
\end{eqnarray}

\noindent
Then the maximal admissible group can be represented as a quotient of a free abelian group.

\begin{eqnarray}\label{lattice_before_transition}
G_{\text{adm}}^{\text{max}} \cong \bigoplus_{i=1}^m \mathbb{Z}v^i \Big/ L_{R}
\end{eqnarray}

\noindent
where the submodule $L_{R}=span_{\mathbb{Z}}(R)\subset \mathbb{Z}^m$ is the $\mathbb{Z}$-linear span of the column vectors  (generating relations) $R\subset \mathbb{Z}^m$.

\begin{eqnarray}\label{}
R = \begin{pmatrix}
A_{11} & 0      & \cdots & 0 \\
0      & A_{22} & \cdots & 0 \\
\vdots & \vdots & \ddots & \vdots \\
0      & 0      & \cdots & A_{mm}
\end{pmatrix}
\end{eqnarray}

If, in addition, we want to take into account the symmetry of the weighted projective space, then the matrix of relations will take the form

\begin{eqnarray}\label{}
Q = \begin{pmatrix}
A_{11} & 0      & \cdots & 0 & 1 \\
0      & A_{22} & \cdots & 0 & 1 \\
\vdots & \vdots & \ddots & \vdots & \vdots\\
0      & 0      & \cdots & A_{mm} & 1
\end{pmatrix}
\end{eqnarray}

\noindent
then

\begin{eqnarray}\label{lattice_proj_before_transition}
G_{\text{adm}}^{\text{max}}/J_A \cong \bigoplus_{i=1}^m \mathbb{Z}v^i \Big/ L_{Q}
\end{eqnarray}

\noindent
Since any column vectors in $R$ and $Q$ is a solution of (\ref{Diophantine_equation_0}) then we can find the coordinates of the column vectors of $R$ and $Q$ in the basis of the fundamental system of solutions (\ref{fundamental_system}), that is, the matrices $R_v$ and $Q_v$

\begin{eqnarray}\label{}
\sum_{i=1}^m v_{j}{}^{i}(R_v)_{ir} = R_{jr},\quad \sum_{i=1}^m v_{j}{}^{i}(Q_v)_{ir} = Q_{jr},
\end{eqnarray}

\noindent
where the first and second indices of the matrices indicate the row and column, respectively. Then relations (\ref{lattice_before_transition}) and (\ref{lattice_proj_before_transition}) take the form
\begin{eqnarray}\label{Ferma_lattice_modules}
G_{\text{adm}}^{\text{max}} \cong \mathbb{Z}^m \Big/ L_{R_v},\quad G_{\text{adm}}^{\text{max}}/J_A \cong \mathbb{Z}^m \Big/ L_{Q_v}.
\end{eqnarray}

\noindent
as $m = \operatorname{rk}(R_v) = \operatorname{rk}(L_{R_v})$ and $m=\operatorname{rk}(Q_v)=\operatorname{rk}(L_{Q_v})$ then the ranks of the free summands in the direct sum decomposition of the groups into cyclic groups are $\operatorname{rk}\!\left(
G_{\mathrm{adm}}^{\max}/\operatorname{Tors}(G_{\mathrm{adm}}^{\max})\right)= m - \operatorname{rk}(R_v) = 0$ and $\operatorname{rk}\!\left(
\left( G_{\mathrm{adm}}^{\max}/J_A\right)/\operatorname{Tors}\!\left(G_{\mathrm{adm}}^{\max}/J_A\right)\right)= m - \operatorname{rk}(Q_v) = 0$. Hence, there are no free summands in the decomposition of $G_{\text{adm}}^{\text{max}}$ and $G_{\text{adm}}^{\text{max}}/J_A $.

We can now apply the method described in Chapter 4 for decomposing the \(\mathbb{Z}\)-modules in (\ref{Ferma_lattice_modules}) into invariant factors
\begin{eqnarray}\label{Smith1}
&&G_{\text{adm}}^{\text{max}} \cong \bigoplus_{i=1}^m \mathbb{Z}/(\mu_i),\qquad\,\,\, \mu_l\mid\mu_p\,\,\text{for}\,\, l<p\\ 
\label{Smith2}
&&G_{\text{adm}}^{\text{max}}/J_A \cong \bigoplus_{j=1}^m \mathbb{Z}/(\nu_j),\quad  \nu_s\mid\nu_c\,\,\text{for}\,\, s<c,
\end{eqnarray}

\noindent
where $\mu_i=(D_{R_v})_{ii}$, $\nu_j=(D_{Q_v})_{jj}$ ($(D_{R_v})_{ii}\neq 0$, $(D_{Q_v})_{jj}\neq 0$) and  some of the direct summands in the decompositions may be trivial.

The approach developed in this chapter was used to obtain decompositions for all Fermat-type Calabi–Yau threefolds. The results are presented in the Appendix.

\section{Conclusion}
\hspace*{\parindent}

\vspace{-1em}

In this paper we have developed the modified algorithm for constructing all Calabi–Yau orbifolds of Berglund–Hübsch type, together with their mirror partners, and have established the correctness of the proposed construction. One of the main advantages of this approach is that it provides an explicit description of generators for the groups defining the corresponding orbifolds. Furthermore, the data produced by the algorithm can be used to determine various topological invariants of the orbifolds and their mirrors, including their Euler characteristics, Hodge numbers, and other related invariants. Thus, the proposed algorithm provides an effective framework for the systematic construction and further study of Berglund–Hübsch-type Calabi–Yau orbifolds and their mirror partners.

Furthermore, we have proposed an algorithm for obtaining all representatives of the classes in the quotient group $H/J_A$ for an arbitrary subgroup $H \subset G_{\mathrm{adm}}^{\max}$.

Another result of this work is the method for representing $G_{\mathrm{adm}}^{\max}$ and $G_{\mathrm{adm}}^{\max}/J_A$ by their invariant factor decompositions (the Frobenius representations of these groups). This approach can be used to classify maximal admissible groups and their quotients by the symmetry group of a weighted projective space for Calabi–Yau varieties.  We illustrate the application of the classification approach to an arbitrary Fermat-type Calabi–Yau manifold. In addition, as an illustrative example, we classify all maximal admissible groups of Fermat-type Calabi–Yau threefolds. Moreover, this method significantly simplifies the computations involved in Step 1 of the algorithm presented in Chapter~2.

\section{Acknowledgments}
\hspace*{\parindent}
The author acknowledges A.~Belavin and G.~Koshevoy for the helpful discussions and would like to thank K.~Stepanyantz and I~Shirokov for helpful advice. The research was carried out within the state assignment of Ministry of Science and Higher Education of the Russian Federation for IITP RAS.

\section{Appendix}

\nopagebreak

\begin{table}[H]
\centering
\begin{tabular}{|c|c|c|c|c|}
\hline
\textbf{\textnumero} &
\textbf{$\mathbf{A}$} &
\textbf{$\mathbf{k}$} &
\textbf{$\mathbf{G_{\mathrm{adm}}^{\max}/J_A}$} &
\textbf{$\mathbf{G_{\mathrm{adm}}^{\max}}$}    \\ \hline 1 & $(2, 3, 7, 43, 1806)$ & $(903, 602, 258, 42, 1)$ & $\{0\}$ & $\mathbb{Z}_{1806}$\\ \hline 2 & $(2, 3, 7, 44, 924)$ & $(462, 308, 132, 21, 1)$ & $\mathbb{Z}_{2}$ & $\mathbb{Z}_{2}\oplus\mathbb{Z}_{924}$\\ \hline 3 & $(2, 3, 7, 45, 630)$ & $(315, 210, 90, 14, 1)$ & $\mathbb{Z}_{3}$ & $\mathbb{Z}_{3}\oplus\mathbb{Z}_{630}$\\ \hline 4 & $(2, 3, 7, 46, 483)$ & $(483, 322, 138, 21, 2)$ & $\{0\}$ & $\mathbb{Z}_{966}$\\ \hline 5 & $(2, 3, 7, 48, 336)$ & $(168, 112, 48, 7, 1)$ & $\mathbb{Z}_{6}$ & $\mathbb{Z}_{6}\oplus\mathbb{Z}_{336}$\\ \hline 6 & $(2, 3, 7, 49, 294)$ & $(147, 98, 42, 6, 1)$ & $\mathbb{Z}_{7}$ & $\mathbb{Z}_{7}\oplus\mathbb{Z}_{294}$\\ \hline 7 & $(2, 3, 7, 51, 238)$ & $(357, 238, 102, 14, 3)$ & $\{0\}$ & $\mathbb{Z}_{714}$\\ \hline 8 & $(2, 3, 7, 54, 189)$ & $(189, 126, 54, 7, 2)$ & $\mathbb{Z}_{3}$ & $\mathbb{Z}_{3}\oplus\mathbb{Z}_{378}$\\ \hline 9 & $(2, 3, 7, 56, 168)$ & $(84, 56, 24, 3, 1)$ & $\mathbb{Z}_{14}$ & $\mathbb{Z}_{14}\oplus\mathbb{Z}_{168}$\\ \hline 10 & $(2, 3, 7, 60, 140)$ & $(210, 140, 60, 7, 3)$ & $\mathbb{Z}_{2}$ & $\mathbb{Z}_{2}\oplus\mathbb{Z}_{420}$\\ \hline 11 & $(2, 3, 7, 63, 126)$ & $(63, 42, 18, 2, 1)$ & $\mathbb{Z}_{21}$ & $\mathbb{Z}_{21}\oplus\mathbb{Z}_{126}$\\ \hline 12 & $(2, 3, 7, 70, 105)$ & $(105, 70, 30, 3, 2)$ & $\mathbb{Z}_{7}$ & $\mathbb{Z}_{7}\oplus\mathbb{Z}_{210}$\\ \hline 13 & $(2, 3, 7, 78, 91)$ & $(273, 182, 78, 7, 6)$ & $\{0\}$ & $\mathbb{Z}_{546}$\\ \hline 14 & $(2, 3, 7, 84, 84)$ & $(42, 28, 12, 1, 1)$ & $\mathbb{Z}_{42}$ & $\mathbb{Z}_{42}\oplus\mathbb{Z}_{84}$\\ \hline 15 & $(2, 3, 8, 25, 600)$ & $(300, 200, 75, 24, 1)$ & $\mathbb{Z}_{2}$ & $\mathbb{Z}_{2}\oplus\mathbb{Z}_{600}$\\ \hline 16 & $(2, 3, 8, 26, 312)$ & $(156, 104, 39, 12, 1)$ & $\mathbb{Z}_{2}\oplus\mathbb{Z}_{2}$ & $\mathbb{Z}_{2}\oplus\mathbb{Z}_{2}\oplus\mathbb{Z}_{312}$\\ \hline 17 & $(2, 3, 8, 27, 216)$ & $(108, 72, 27, 8, 1)$ & $\mathbb{Z}_{6}$ & $\mathbb{Z}_{6}\oplus\mathbb{Z}_{216}$\\ \hline 18 & $(2, 3, 8, 28, 168)$ & $(84, 56, 21, 6, 1)$ & $\mathbb{Z}_{2}\oplus\mathbb{Z}_{4}$ & $\mathbb{Z}_{2}\oplus\mathbb{Z}_{4}\oplus\mathbb{Z}_{168}$\\ \hline 19 & $(2, 3, 8, 30, 120)$ & $(60, 40, 15, 4, 1)$ & $\mathbb{Z}_{2}\oplus\mathbb{Z}_{6}$ & $\mathbb{Z}_{2}\oplus\mathbb{Z}_{6}\oplus\mathbb{Z}_{120}$\\ \hline 20 & $(2, 3, 8, 32, 96)$ & $(48, 32, 12, 3, 1)$ & $\mathbb{Z}_{2}\oplus\mathbb{Z}_{8}$ & $\mathbb{Z}_{2}\oplus\mathbb{Z}_{8}\oplus\mathbb{Z}_{96}$\\ \hline 21 & $(2, 3, 8, 33, 88)$ & $(132, 88, 33, 8, 3)$ & $\mathbb{Z}_{2}$ & $\mathbb{Z}_{2}\oplus\mathbb{Z}_{264}$\\ \hline 22 & $(2, 3, 8, 36, 72)$ & $(36, 24, 9, 2, 1)$ & $\mathbb{Z}_{2}\oplus\mathbb{Z}_{12}$ & $\mathbb{Z}_{2}\oplus\mathbb{Z}_{12}\oplus\mathbb{Z}_{72}$\\ \hline 23 & $(2, 3, 8, 40, 60)$ & $(60, 40, 15, 3, 2)$ & $\mathbb{Z}_{2}\oplus\mathbb{Z}_{4}$ & $\mathbb{Z}_{2}\oplus\mathbb{Z}_{4}\oplus\mathbb{Z}_{120}$\\ \hline 24 & $(2, 3, 8, 42, 56)$ & $(84, 56, 21, 4, 3)$ & $\mathbb{Z}_{2}\oplus\mathbb{Z}_{2}$ & $\mathbb{Z}_{2}\oplus\mathbb{Z}_{2}\oplus\mathbb{Z}_{168}$\\ \hline 25 & $(2, 3, 8, 48, 48)$ & $(24, 16, 6, 1, 1)$ & $\mathbb{Z}_{2}\oplus\mathbb{Z}_{24}$ & $\mathbb{Z}_{2}\oplus\mathbb{Z}_{24}\oplus\mathbb{Z}_{48}$\\ \hline 26 & $(2, 3, 9, 19, 342)$ & $(171, 114, 38, 18, 1)$ & $\mathbb{Z}_{3}$ & $\mathbb{Z}_{3}\oplus\mathbb{Z}_{342}$\\ \hline 27 & $(2, 3, 9, 20, 180)$ & $(90, 60, 20, 9, 1)$ & $\mathbb{Z}_{6}$ & $\mathbb{Z}_{6}\oplus\mathbb{Z}_{180}$\\ \hline 28 & $(2, 3, 9, 21, 126)$ & $(63, 42, 14, 6, 1)$ & $\mathbb{Z}_{3}\oplus\mathbb{Z}_{3}$ & $\mathbb{Z}_{3}\oplus\mathbb{Z}_{3}\oplus\mathbb{Z}_{126}$\\ \hline 29 & $(2, 3, 9, 22, 99)$ & $(99, 66, 22, 9, 2)$ & $\mathbb{Z}_{3}$ & $\mathbb{Z}_{3}\oplus\mathbb{Z}_{198}$\\ \hline 30 & $(2, 3, 9, 24, 72)$ & $(36, 24, 8, 3, 1)$ & $\mathbb{Z}_{3}\oplus\mathbb{Z}_{6}$ & $\mathbb{Z}_{3}\oplus\mathbb{Z}_{6}\oplus\mathbb{Z}_{72}$\\ \hline 31 & $(2, 3, 9, 27, 54)$ & $(27, 18, 6, 2, 1)$ & $\mathbb{Z}_{3}\oplus\mathbb{Z}_{9}$ & $\mathbb{Z}_{3}\oplus\mathbb{Z}_{9}\oplus\mathbb{Z}_{54}$\\ \hline 32 & $(2, 3, 9, 30, 45)$ & $(45, 30, 10, 3, 2)$ & $\mathbb{Z}_{3}\oplus\mathbb{Z}_{3}$ & $\mathbb{Z}_{3}\oplus\mathbb{Z}_{3}\oplus\mathbb{Z}_{90}$\\ \hline 33 & $(2, 3, 9, 36, 36)$ & $(18, 12, 4, 1, 1)$ & $\mathbb{Z}_{3}\oplus\mathbb{Z}_{18}$ & $\mathbb{Z}_{3}\oplus\mathbb{Z}_{18}\oplus\mathbb{Z}_{36}$\\ \hline 34 & $(2, 3, 10, 16, 240)$ & $(120, 80, 24, 15, 1)$ & $\mathbb{Z}_{2}\oplus\mathbb{Z}_{2}$ & $\mathbb{Z}_{2}\oplus\mathbb{Z}_{2}\oplus\mathbb{Z}_{240}$\\ \hline 35 & $(2, 3, 10, 18, 90)$ & $(45, 30, 9, 5, 1)$ & $\mathbb{Z}_{2}\oplus\mathbb{Z}_{6}$ & $\mathbb{Z}_{2}\oplus\mathbb{Z}_{6}\oplus\mathbb{Z}_{90}$\\ \hline 36 & $(2, 3, 10, 20, 60)$ & $(30, 20, 6, 3, 1)$ & $\mathbb{Z}_{2}\oplus\mathbb{Z}_{10}$ & $\mathbb{Z}_{2}\oplus\mathbb{Z}_{10}\oplus\mathbb{Z}_{60}$\\ \hline 37 & $(2, 3, 10, 24, 40)$ & $(60, 40, 12, 5, 3)$ & $\mathbb{Z}_{2}\oplus\mathbb{Z}_{2}$ & $\mathbb{Z}_{2}\oplus\mathbb{Z}_{2}\oplus\mathbb{Z}_{120}$\\ \hline 38 & $(2, 3, 10, 30, 30)$ & $(15, 10, 3, 1, 1)$ & $\mathbb{Z}_{2}\oplus\mathbb{Z}_{30}$ & $\mathbb{Z}_{2}\oplus\mathbb{Z}_{30}\oplus\mathbb{Z}_{30}$\\ \hline 39 & $(2, 3, 11, 14, 231)$ & $(231, 154, 42, 33, 2)$ & $\{0\}$ & $\mathbb{Z}_{462}$\\ \hline 40 & $(2, 3, 11, 15, 110)$ & $(165, 110, 30, 22, 3)$ & $\{0\}$ & $\mathbb{Z}_{330}$\\ \hline 41 & $(2, 3, 11, 22, 33)$ & $(33, 22, 6, 3, 2)$ & $\mathbb{Z}_{11}$ & $\mathbb{Z}_{11}\oplus\mathbb{Z}_{66}$\\ \hline 42 & $(2, 3, 12, 13, 156)$ & $(78, 52, 13, 12, 1)$ & $\mathbb{Z}_{6}$ & $\mathbb{Z}_{6}\oplus\mathbb{Z}_{156}$\\ \hline 43 & $(2, 3, 12, 14, 84)$ & $(42, 28, 7, 6, 1)$ & $\mathbb{Z}_{2}\oplus\mathbb{Z}_{6}$ & $\mathbb{Z}_{2}\oplus\mathbb{Z}_{6}\oplus\mathbb{Z}_{84}$\\ \hline 44 & $(2, 3, 12, 15, 60)$ & $(30, 20, 5, 4, 1)$ & $\mathbb{Z}_{3}\oplus\mathbb{Z}_{6}$ & $\mathbb{Z}_{3}\oplus\mathbb{Z}_{6}\oplus\mathbb{Z}_{60}$\\ \hline 45 & $(2, 3, 12, 16, 48)$ & $(24, 16, 4, 3, 1)$ & $\mathbb{Z}_{2}\oplus\mathbb{Z}_{12}$ & $\mathbb{Z}_{2}\oplus\mathbb{Z}_{12}\oplus\mathbb{Z}_{48}$\\ \hline 46 & $(2, 3, 12, 18, 36)$ & $(18, 12, 3, 2, 1)$ & $\mathbb{Z}_{6}\oplus\mathbb{Z}_{6}$ & $\mathbb{Z}_{6}\oplus\mathbb{Z}_{6}\oplus\mathbb{Z}_{36}$ \\ \hline   \end{tabular}  \end{table}

\begin{table}[H]
 \centering
\begin{tabular}{|c|c|c|c|c|}
\hline
\textbf{\textnumero} &
\textbf{$\mathbf{A}$} &
\textbf{$\mathbf{k}$} &
\textbf{$\mathbf{G_{\mathrm{adm}}^{\max}/J_A}$} &
\textbf{$\mathbf{G_{\mathrm{adm}}^{\max}}$}    \\ \hline 47 & $(2, 3, 12, 20, 30)$ & $(30, 20, 5, 3, 2)$ & $\mathbb{Z}_{2}\oplus\mathbb{Z}_{6}$ & $\mathbb{Z}_{2}\oplus\mathbb{Z}_{6}\oplus\mathbb{Z}_{60}$\\ \hline 48 & $(2, 3, 12, 21, 28)$ & $(42, 28, 7, 4, 3)$ & $\mathbb{Z}_{6}$ & $\mathbb{Z}_{6}\oplus\mathbb{Z}_{84}$\\ \hline 49 & $(2, 3, 12, 24, 24)$ & $(12, 8, 2, 1, 1)$ & $\mathbb{Z}_{6}\oplus\mathbb{Z}_{12}$ & $\mathbb{Z}_{6}\oplus\mathbb{Z}_{12}\oplus\mathbb{Z}_{24}$\\ \hline 50 & $(2, 3, 13, 13, 78)$ & $(39, 26, 6, 6, 1)$ & $\mathbb{Z}_{13}$ & $\mathbb{Z}_{13}\oplus\mathbb{Z}_{78}$\\ \hline 51 & $(2, 3, 14, 14, 42)$ & $(21, 14, 3, 3, 1)$ & $\mathbb{Z}_{2}\oplus\mathbb{Z}_{14}$ & $\mathbb{Z}_{2}\oplus\mathbb{Z}_{14}\oplus\mathbb{Z}_{42}$\\ \hline 52 & $(2, 3, 14, 15, 35)$ & $(105, 70, 15, 14, 6)$ & $\{0\}$ & $\mathbb{Z}_{210}$\\ \hline 53 & $(2, 3, 14, 21, 21)$ & $(21, 14, 3, 2, 2)$ & $\mathbb{Z}_{21}$ & $\mathbb{Z}_{21}\oplus\mathbb{Z}_{42}$\\ \hline 54 & $(2, 3, 15, 15, 30)$ & $(15, 10, 2, 2, 1)$ & $\mathbb{Z}_{3}\oplus\mathbb{Z}_{15}$ & $\mathbb{Z}_{3}\oplus\mathbb{Z}_{15}\oplus\mathbb{Z}_{30}$\\ \hline 55 & $(2, 3, 15, 20, 20)$ & $(30, 20, 4, 3, 3)$ & $\mathbb{Z}_{10}$ & $\mathbb{Z}_{10}\oplus\mathbb{Z}_{60}$\\ \hline 56 & $(2, 3, 16, 16, 24)$ & $(24, 16, 3, 3, 2)$ & $\mathbb{Z}_{2}\oplus\mathbb{Z}_{8}$ & $\mathbb{Z}_{2}\oplus\mathbb{Z}_{8}\oplus\mathbb{Z}_{48}$\\ \hline 57 & $(2, 3, 18, 18, 18)$ & $(9, 6, 1, 1, 1)$ & $\mathbb{Z}_{6}\oplus\mathbb{Z}_{18}$ & $\mathbb{Z}_{6}\oplus\mathbb{Z}_{18}\oplus\mathbb{Z}_{18}$\\ \hline 58 & $(2, 4, 5, 21, 420)$ & $(210, 105, 84, 20, 1)$ & $\mathbb{Z}_{2}$ & $\mathbb{Z}_{2}\oplus\mathbb{Z}_{420}$\\ \hline 59 & $(2, 4, 5, 22, 220)$ & $(110, 55, 44, 10, 1)$ & $\mathbb{Z}_{2}\oplus\mathbb{Z}_{2}$ & $\mathbb{Z}_{2}\oplus\mathbb{Z}_{2}\oplus\mathbb{Z}_{220}$\\ \hline 60 & $(2, 4, 5, 24, 120)$ & $(60, 30, 24, 5, 1)$ & $\mathbb{Z}_{2}\oplus\mathbb{Z}_{4}$ & $\mathbb{Z}_{2}\oplus\mathbb{Z}_{4}\oplus\mathbb{Z}_{120}$\\ \hline 61 & $(2, 4, 5, 25, 100)$ & $(50, 25, 20, 4, 1)$ & $\mathbb{Z}_{10}$ & $\mathbb{Z}_{10}\oplus\mathbb{Z}_{100}$\\ \hline 62 & $(2, 4, 5, 28, 70)$ & $(70, 35, 28, 5, 2)$ & $\mathbb{Z}_{2}\oplus\mathbb{Z}_{2}$ & $\mathbb{Z}_{2}\oplus\mathbb{Z}_{2}\oplus\mathbb{Z}_{140}$\\ \hline 63 & $(2, 4, 5, 30, 60)$ & $(30, 15, 12, 2, 1)$ & $\mathbb{Z}_{2}\oplus\mathbb{Z}_{10}$ & $\mathbb{Z}_{2}\oplus\mathbb{Z}_{10}\oplus\mathbb{Z}_{60}$\\ \hline 64 & $(2, 4, 5, 36, 45)$ & $(90, 45, 36, 5, 4)$ & $\mathbb{Z}_{2}$ & $\mathbb{Z}_{2}\oplus\mathbb{Z}_{180}$\\ \hline 65 & $(2, 4, 5, 40, 40)$ & $(20, 10, 8, 1, 1)$ & $\mathbb{Z}_{2}\oplus\mathbb{Z}_{20}$ & $\mathbb{Z}_{2}\oplus\mathbb{Z}_{20}\oplus\mathbb{Z}_{40}$\\ \hline 66 & $(2, 4, 6, 13, 156)$ & $(78, 39, 26, 12, 1)$ & $\mathbb{Z}_{2}\oplus\mathbb{Z}_{2}$ & $\mathbb{Z}_{2}\oplus\mathbb{Z}_{2}\oplus\mathbb{Z}_{156}$\\ \hline 67 & $(2, 4, 6, 14, 84)$ & $(42, 21, 14, 6, 1)$ & $\mathbb{Z}_{2}\oplus\mathbb{Z}_{2}\oplus\mathbb{Z}_{2}$ & $\mathbb{Z}_{2}\oplus\mathbb{Z}_{2}\oplus\mathbb{Z}_{2}\oplus\mathbb{Z}_{84}$\\ \hline 68 & $(2, 4, 6, 15, 60)$ & $(30, 15, 10, 4, 1)$ & $\mathbb{Z}_{2}\oplus\mathbb{Z}_{6}$ & $\mathbb{Z}_{2}\oplus\mathbb{Z}_{6}\oplus\mathbb{Z}_{60}$\\ \hline 69 & $(2, 4, 6, 16, 48)$ & $(24, 12, 8, 3, 1)$ & $\mathbb{Z}_{2}\oplus\mathbb{Z}_{2}\oplus\mathbb{Z}_{4}$ & $\mathbb{Z}_{2}\oplus\mathbb{Z}_{2}\oplus\mathbb{Z}_{4}\oplus\mathbb{Z}_{48}$\\ \hline 70 & $(2, 4, 6, 18, 36)$ & $(18, 9, 6, 2, 1)$ & $\mathbb{Z}_{2}\oplus\mathbb{Z}_{2}\oplus\mathbb{Z}_{6}$ & $\mathbb{Z}_{2}\oplus\mathbb{Z}_{2}\oplus\mathbb{Z}_{6}\oplus\mathbb{Z}_{36}$\\ \hline 71 & $(2, 4, 6, 20, 30)$ & $(30, 15, 10, 3, 2)$ & $\mathbb{Z}_{2}\oplus\mathbb{Z}_{2}\oplus\mathbb{Z}_{2}$ & $\mathbb{Z}_{2}\oplus\mathbb{Z}_{2}\oplus\mathbb{Z}_{2}\oplus\mathbb{Z}_{60}$\\ \hline 72 & $(2, 4, 6, 21, 28)$ & $(42, 21, 14, 4, 3)$ & $\mathbb{Z}_{2}\oplus\mathbb{Z}_{2}$ & $\mathbb{Z}_{2}\oplus\mathbb{Z}_{2}\oplus\mathbb{Z}_{84}$\\ \hline 73 & $(2, 4, 6, 24, 24)$ & $(12, 6, 4, 1, 1)$ & $\mathbb{Z}_{2}\oplus\mathbb{Z}_{2}\oplus\mathbb{Z}_{12}$ & $\mathbb{Z}_{2}\oplus\mathbb{Z}_{2}\oplus\mathbb{Z}_{12}\oplus\mathbb{Z}_{24}$\\ \hline 74 & $(2, 4, 7, 10, 140)$ & $(70, 35, 20, 14, 1)$ & $\mathbb{Z}_{2}\oplus\mathbb{Z}_{2}$ & $\mathbb{Z}_{2}\oplus\mathbb{Z}_{2}\oplus\mathbb{Z}_{140}$\\ \hline 75 & $(2, 4, 7, 12, 42)$ & $(42, 21, 12, 7, 2)$ & $\mathbb{Z}_{2}\oplus\mathbb{Z}_{2}$ & $\mathbb{Z}_{2}\oplus\mathbb{Z}_{2}\oplus\mathbb{Z}_{84}$\\ \hline 76 & $(2, 4, 7, 14, 28)$ & $(14, 7, 4, 2, 1)$ & $\mathbb{Z}_{2}\oplus\mathbb{Z}_{14}$ & $\mathbb{Z}_{2}\oplus\mathbb{Z}_{14}\oplus\mathbb{Z}_{28}$\\ \hline 77 & $(2, 4, 8, 9, 72)$ & $(36, 18, 9, 8, 1)$ & $\mathbb{Z}_{2}\oplus\mathbb{Z}_{4}$ & $\mathbb{Z}_{2}\oplus\mathbb{Z}_{4}\oplus\mathbb{Z}_{72}$\\ \hline 78 & $(2, 4, 8, 10, 40)$ & $(20, 10, 5, 4, 1)$ & $\mathbb{Z}_{2}\oplus\mathbb{Z}_{2}\oplus\mathbb{Z}_{4}$ & $\mathbb{Z}_{2}\oplus\mathbb{Z}_{2}\oplus\mathbb{Z}_{4}\oplus\mathbb{Z}_{40}$\\ \hline 79 & $(2, 4, 8, 12, 24)$ & $(12, 6, 3, 2, 1)$ & $\mathbb{Z}_{2}\oplus\mathbb{Z}_{4}\oplus\mathbb{Z}_{4}$ & $\mathbb{Z}_{2}\oplus\mathbb{Z}_{4}\oplus\mathbb{Z}_{4}\oplus\mathbb{Z}_{24}$\\ \hline 80 & $(2, 4, 8, 16, 16)$ & $(8, 4, 2, 1, 1)$ & $\mathbb{Z}_{2}\oplus\mathbb{Z}_{4}\oplus\mathbb{Z}_{8}$ & $\mathbb{Z}_{2}\oplus\mathbb{Z}_{4}\oplus\mathbb{Z}_{8}\oplus\mathbb{Z}_{16}$\\ \hline 81 & $(2, 4, 9, 9, 36)$ & $(18, 9, 4, 4, 1)$ & $\mathbb{Z}_{18}$ & $\mathbb{Z}_{18}\oplus\mathbb{Z}_{36}$\\ \hline 82 & $(2, 4, 9, 12, 18)$ & $(18, 9, 4, 3, 2)$ & $\mathbb{Z}_{2}\oplus\mathbb{Z}_{6}$ & $\mathbb{Z}_{2}\oplus\mathbb{Z}_{6}\oplus\mathbb{Z}_{36}$\\ \hline 83 & $(2, 4, 10, 10, 20)$ & $(10, 5, 2, 2, 1)$ & $\mathbb{Z}_{2}\oplus\mathbb{Z}_{2}\oplus\mathbb{Z}_{10}$ & $\mathbb{Z}_{2}\oplus\mathbb{Z}_{2}\oplus\mathbb{Z}_{10}\oplus\mathbb{Z}_{20}$\\ \hline 84 & $(2, 4, 10, 12, 15)$ & $(30, 15, 6, 5, 4)$ & $\mathbb{Z}_{2}\oplus\mathbb{Z}_{2}$ & $\mathbb{Z}_{2}\oplus\mathbb{Z}_{2}\oplus\mathbb{Z}_{60}$\\ \hline 85 & $(2, 4, 12, 12, 12)$ & $(6, 3, 1, 1, 1)$ & $\mathbb{Z}_{2}\oplus\mathbb{Z}_{4}\oplus\mathbb{Z}_{12}$ & $\mathbb{Z}_{2}\oplus\mathbb{Z}_{4}\oplus\mathbb{Z}_{12}\oplus\mathbb{Z}_{12}$\\ \hline 86 & $(2, 5, 5, 11, 110)$ & $(55, 22, 22, 10, 1)$ & $\mathbb{Z}_{5}$ & $\mathbb{Z}_{5}\oplus\mathbb{Z}_{110}$\\ \hline 87 & $(2, 5, 5, 12, 60)$ & $(30, 12, 12, 5, 1)$ & $\mathbb{Z}_{10}$ & $\mathbb{Z}_{10}\oplus\mathbb{Z}_{60}$\\ \hline 88 & $(2, 5, 5, 14, 35)$ & $(35, 14, 14, 5, 2)$ & $\mathbb{Z}_{5}$ & $\mathbb{Z}_{5}\oplus\mathbb{Z}_{70}$\\ \hline 89 & $(2, 5, 5, 15, 30)$ & $(15, 6, 6, 2, 1)$ & $\mathbb{Z}_{5}\oplus\mathbb{Z}_{5}$ & $\mathbb{Z}_{5}\oplus\mathbb{Z}_{5}\oplus\mathbb{Z}_{30}$\\ \hline 90 & $(2, 5, 5, 20, 20)$ & $(10, 4, 4, 1, 1)$ & $\mathbb{Z}_{5}\oplus\mathbb{Z}_{10}$ & $\mathbb{Z}_{5}\oplus\mathbb{Z}_{10}\oplus\mathbb{Z}_{20}$\\ \hline 91 & $(2, 5, 6, 8, 120)$ & $(60, 24, 20, 15, 1)$ & $\mathbb{Z}_{2}\oplus\mathbb{Z}_{2}$ & $\mathbb{Z}_{2}\oplus\mathbb{Z}_{2}\oplus\mathbb{Z}_{120}$\\ \hline 92 & $(2, 5, 6, 9, 45)$ & $(45, 18, 15, 10, 2)$ & $\mathbb{Z}_{3}$ & $\mathbb{Z}_{3}\oplus\mathbb{Z}_{90}$\\ \hline 93 & $(2, 5, 6, 10, 30)$ & $(15, 6, 5, 3, 1)$ & $\mathbb{Z}_{2}\oplus\mathbb{Z}_{10}$ & $\mathbb{Z}_{2}\oplus\mathbb{Z}_{10}\oplus\mathbb{Z}_{30}$ \\ \hline   \end{tabular}  \end{table}

\begin{table}[H]
\centering
\begin{tabular}{|c|c|c|c|c|}
\hline
\textbf{\textnumero} &
\textbf{$\mathbf{A}$} &
\textbf{$\mathbf{k}$} &
\textbf{$\mathbf{G_{\mathrm{adm}}^{\max}/J_A}$} &
\textbf{$\mathbf{G_{\mathrm{adm}}^{\max}}$}    \\ \hline 94 & $(2, 5, 6, 12, 20)$ & $(30, 12, 10, 5, 3)$ & $\mathbb{Z}_{2}\oplus\mathbb{Z}_{2}$ & $\mathbb{Z}_{2}\oplus\mathbb{Z}_{2}\oplus\mathbb{Z}_{60}$\\ \hline 95 & $(2, 5, 6, 15, 15)$ & $(15, 6, 5, 2, 2)$ & $\mathbb{Z}_{15}$ & $\mathbb{Z}_{15}\oplus\mathbb{Z}_{30}$\\ \hline 96 & $(2, 5, 7, 7, 70)$ & $(35, 14, 10, 10, 1)$ & $\mathbb{Z}_{7}$ & $\mathbb{Z}_{7}\oplus\mathbb{Z}_{70}$\\ \hline 97 & $(2, 5, 8, 8, 20)$ & $(20, 8, 5, 5, 2)$ & $\mathbb{Z}_{2}\oplus\mathbb{Z}_{4}$ & $\mathbb{Z}_{2}\oplus\mathbb{Z}_{4}\oplus\mathbb{Z}_{40}$\\ \hline 98 & $(2, 5, 10, 10, 10)$ & $(5, 2, 1, 1, 1)$ & $\mathbb{Z}_{10}\oplus\mathbb{Z}_{10}$ & $\mathbb{Z}_{10}\oplus\mathbb{Z}_{10}\oplus\mathbb{Z}_{10}$\\ \hline 99 & $(2, 6, 6, 7, 42)$ & $(21, 7, 7, 6, 1)$ & $\mathbb{Z}_{2}\oplus\mathbb{Z}_{6}$ & $\mathbb{Z}_{2}\oplus\mathbb{Z}_{6}\oplus\mathbb{Z}_{42}$\\ \hline 100 & $(2, 6, 6, 8, 24)$ & $(12, 4, 4, 3, 1)$ & $\mathbb{Z}_{2}\oplus\mathbb{Z}_{2}\oplus\mathbb{Z}_{6}$ & $\mathbb{Z}_{2}\oplus\mathbb{Z}_{2}\oplus\mathbb{Z}_{6}\oplus\mathbb{Z}_{24}$\\ \hline 101 & $(2, 6, 6, 9, 18)$ & $(9, 3, 3, 2, 1)$ & $\mathbb{Z}_{6}\oplus\mathbb{Z}_{6}$ & $\mathbb{Z}_{6}\oplus\mathbb{Z}_{6}\oplus\mathbb{Z}_{18}$\\ \hline 102 & $(2, 6, 6, 10, 15)$ & $(15, 5, 5, 3, 2)$ & $\mathbb{Z}_{2}\oplus\mathbb{Z}_{6}$ & $\mathbb{Z}_{2}\oplus\mathbb{Z}_{6}\oplus\mathbb{Z}_{30}$\\ \hline 103 & $(2, 6, 6, 12, 12)$ & $(6, 2, 2, 1, 1)$ & $\mathbb{Z}_{2}\oplus\mathbb{Z}_{6}\oplus\mathbb{Z}_{6}$ & $\mathbb{Z}_{2}\oplus\mathbb{Z}_{6}\oplus\mathbb{Z}_{6}\oplus\mathbb{Z}_{12}$\\ \hline 104 & $(2, 6, 7, 7, 21)$ & $(21, 7, 6, 6, 2)$ & $\mathbb{Z}_{7}$ & $\mathbb{Z}_{7}\oplus\mathbb{Z}_{42}$\\ \hline 105 & $(2, 6, 8, 8, 12)$ & $(12, 4, 3, 3, 2)$ & $\mathbb{Z}_{2}\oplus\mathbb{Z}_{2}\oplus\mathbb{Z}_{4}$ & $\mathbb{Z}_{2}\oplus\mathbb{Z}_{2}\oplus\mathbb{Z}_{4}\oplus\mathbb{Z}_{24}$\\ \hline 106 & $(2, 6, 9, 9, 9)$ & $(9, 3, 2, 2, 2)$ & $\mathbb{Z}_{3}\oplus\mathbb{Z}_{9}$ & $\mathbb{Z}_{3}\oplus\mathbb{Z}_{9}\oplus\mathbb{Z}_{18}$\\ \hline 107 & $(2, 7, 7, 7, 14)$ & $(7, 2, 2, 2, 1)$ & $\mathbb{Z}_{7}\oplus\mathbb{Z}_{7}$ & $\mathbb{Z}_{7}\oplus\mathbb{Z}_{7}\oplus\mathbb{Z}_{14}$\\ \hline 108 & $(2, 8, 8, 8, 8)$ & $(4, 1, 1, 1, 1)$ & $\mathbb{Z}_{2}\oplus\mathbb{Z}_{8}\oplus\mathbb{Z}_{8}$ & $\mathbb{Z}_{2}\oplus\mathbb{Z}_{8}\oplus\mathbb{Z}_{8}\oplus\mathbb{Z}_{8}$\\ \hline 109 & $(3, 3, 4, 13, 156)$ & $(52, 52, 39, 12, 1)$ & $\mathbb{Z}_{3}$ & $\mathbb{Z}_{3}\oplus\mathbb{Z}_{156}$\\ \hline 110 & $(3, 3, 4, 14, 84)$ & $(28, 28, 21, 6, 1)$ & $\mathbb{Z}_{6}$ & $\mathbb{Z}_{6}\oplus\mathbb{Z}_{84}$\\ \hline 111 & $(3, 3, 4, 15, 60)$ & $(20, 20, 15, 4, 1)$ & $\mathbb{Z}_{3}\oplus\mathbb{Z}_{3}$ & $\mathbb{Z}_{3}\oplus\mathbb{Z}_{3}\oplus\mathbb{Z}_{60}$\\ \hline 112 & $(3, 3, 4, 16, 48)$ & $(16, 16, 12, 3, 1)$ & $\mathbb{Z}_{12}$ & $\mathbb{Z}_{12}\oplus\mathbb{Z}_{48}$\\ \hline 113 & $(3, 3, 4, 18, 36)$ & $(12, 12, 9, 2, 1)$ & $\mathbb{Z}_{3}\oplus\mathbb{Z}_{6}$ & $\mathbb{Z}_{3}\oplus\mathbb{Z}_{6}\oplus\mathbb{Z}_{36}$\\ \hline 114 & $(3, 3, 4, 20, 30)$ & $(20, 20, 15, 3, 2)$ & $\mathbb{Z}_{6}$ & $\mathbb{Z}_{6}\oplus\mathbb{Z}_{60}$\\ \hline 115 & $(3, 3, 4, 21, 28)$ & $(28, 28, 21, 4, 3)$ & $\mathbb{Z}_{3}$ & $\mathbb{Z}_{3}\oplus\mathbb{Z}_{84}$\\ \hline 116 & $(3, 3, 4, 24, 24)$ & $(8, 8, 6, 1, 1)$ & $\mathbb{Z}_{3}\oplus\mathbb{Z}_{12}$ & $\mathbb{Z}_{3}\oplus\mathbb{Z}_{12}\oplus\mathbb{Z}_{24}$\\ \hline 117 & $(3, 3, 5, 8, 120)$ & $(40, 40, 24, 15, 1)$ & $\mathbb{Z}_{3}$ & $\mathbb{Z}_{3}\oplus\mathbb{Z}_{120}$\\ \hline 118 & $(3, 3, 5, 9, 45)$ & $(15, 15, 9, 5, 1)$ & $\mathbb{Z}_{3}\oplus\mathbb{Z}_{3}$ & $\mathbb{Z}_{3}\oplus\mathbb{Z}_{3}\oplus\mathbb{Z}_{45}$\\ \hline 119 & $(3, 3, 5, 10, 30)$ & $(10, 10, 6, 3, 1)$ & $\mathbb{Z}_{15}$ & $\mathbb{Z}_{15}\oplus\mathbb{Z}_{30}$\\ \hline 120 & $(3, 3, 5, 12, 20)$ & $(20, 20, 12, 5, 3)$ & $\mathbb{Z}_{3}$ & $\mathbb{Z}_{3}\oplus\mathbb{Z}_{60}$\\ \hline 121 & $(3, 3, 5, 15, 15)$ & $(5, 5, 3, 1, 1)$ & $\mathbb{Z}_{3}\oplus\mathbb{Z}_{15}$ & $\mathbb{Z}_{3}\oplus\mathbb{Z}_{15}\oplus\mathbb{Z}_{15}$\\ \hline 122 & $(3, 3, 6, 7, 42)$ & $(14, 14, 7, 6, 1)$ & $\mathbb{Z}_{3}\oplus\mathbb{Z}_{3}$ & $\mathbb{Z}_{3}\oplus\mathbb{Z}_{3}\oplus\mathbb{Z}_{42}$\\ \hline 123 & $(3, 3, 6, 8, 24)$ & $(8, 8, 4, 3, 1)$ & $\mathbb{Z}_{3}\oplus\mathbb{Z}_{6}$ & $\mathbb{Z}_{3}\oplus\mathbb{Z}_{6}\oplus\mathbb{Z}_{24}$\\ \hline 124 & $(3, 3, 6, 9, 18)$ & $(6, 6, 3, 2, 1)$ & $\mathbb{Z}_{3}\oplus\mathbb{Z}_{3}\oplus\mathbb{Z}_{3}$ & $\mathbb{Z}_{3}\oplus\mathbb{Z}_{3}\oplus\mathbb{Z}_{3}\oplus\mathbb{Z}_{18}$\\ \hline 125 & $(3, 3, 6, 10, 15)$ & $(10, 10, 5, 3, 2)$ & $\mathbb{Z}_{3}\oplus\mathbb{Z}_{3}$ & $\mathbb{Z}_{3}\oplus\mathbb{Z}_{3}\oplus\mathbb{Z}_{30}$\\ \hline 126 & $(3, 3, 6, 12, 12)$ & $(4, 4, 2, 1, 1)$ & $\mathbb{Z}_{3}\oplus\mathbb{Z}_{3}\oplus\mathbb{Z}_{6}$ & $\mathbb{Z}_{3}\oplus\mathbb{Z}_{3}\oplus\mathbb{Z}_{6}\oplus\mathbb{Z}_{12}$\\ \hline 127 & $(3, 3, 7, 7, 21)$ & $(7, 7, 3, 3, 1)$ & $\mathbb{Z}_{21}$ & $\mathbb{Z}_{21}\oplus\mathbb{Z}_{21}$\\ \hline 128 & $(3, 3, 8, 8, 12)$ & $(8, 8, 3, 3, 2)$ & $\mathbb{Z}_{12}$ & $\mathbb{Z}_{12}\oplus\mathbb{Z}_{24}$\\ \hline 129 & $(3, 3, 9, 9, 9)$ & $(3, 3, 1, 1, 1)$ & $\mathbb{Z}_{3}\oplus\mathbb{Z}_{3}\oplus\mathbb{Z}_{9}$ & $\mathbb{Z}_{3}\oplus\mathbb{Z}_{3}\oplus\mathbb{Z}_{9}\oplus\mathbb{Z}_{9}$\\ \hline 130 & $(3, 4, 4, 7, 42)$ & $(28, 21, 21, 12, 2)$ & $\mathbb{Z}_{2}$ & $\mathbb{Z}_{2}\oplus\mathbb{Z}_{84}$\\ \hline 131 & $(3, 4, 4, 8, 24)$ & $(8, 6, 6, 3, 1)$ & $\mathbb{Z}_{4}\oplus\mathbb{Z}_{4}$ & $\mathbb{Z}_{4}\oplus\mathbb{Z}_{4}\oplus\mathbb{Z}_{24}$\\ \hline 132 & $(3, 4, 4, 9, 18)$ & $(12, 9, 9, 4, 2)$ & $\mathbb{Z}_{6}$ & $\mathbb{Z}_{6}\oplus\mathbb{Z}_{36}$\\ \hline 133 & $(3, 4, 4, 10, 15)$ & $(20, 15, 15, 6, 4)$ & $\mathbb{Z}_{2}$ & $\mathbb{Z}_{2}\oplus\mathbb{Z}_{60}$\\ \hline 134 & $(3, 4, 4, 12, 12)$ & $(4, 3, 3, 1, 1)$ & $\mathbb{Z}_{4}\oplus\mathbb{Z}_{12}$ & $\mathbb{Z}_{4}\oplus\mathbb{Z}_{12}\oplus\mathbb{Z}_{12}$\\ \hline 135 & $(3, 4, 5, 5, 60)$ & $(20, 15, 12, 12, 1)$ & $\mathbb{Z}_{5}$ & $\mathbb{Z}_{5}\oplus\mathbb{Z}_{60}$\\ \hline 136 & $(3, 4, 5, 6, 20)$ & $(20, 15, 12, 10, 3)$ & $\mathbb{Z}_{2}$ & $\mathbb{Z}_{2}\oplus\mathbb{Z}_{60}$\\ \hline 137 & $(3, 4, 6, 6, 12)$ & $(4, 3, 2, 2, 1)$ & $\mathbb{Z}_{6}\oplus\mathbb{Z}_{6}$ & $\mathbb{Z}_{6}\oplus\mathbb{Z}_{6}\oplus\mathbb{Z}_{12}$\\ \hline 138 & $(3, 4, 6, 8, 8)$ & $(8, 6, 4, 3, 3)$ & $\mathbb{Z}_{2}\oplus\mathbb{Z}_{4}$ & $\mathbb{Z}_{2}\oplus\mathbb{Z}_{4}\oplus\mathbb{Z}_{24}$\\ \hline 139 & $(3, 5, 5, 5, 15)$ & $(5, 3, 3, 3, 1)$ & $\mathbb{Z}_{5}\oplus\mathbb{Z}_{5}$ & $\mathbb{Z}_{5}\oplus\mathbb{Z}_{5}\oplus\mathbb{Z}_{15}$\\ \hline 140 & $(3, 5, 5, 6, 10)$ & $(10, 6, 6, 5, 3)$ & $\mathbb{Z}_{5}$ & $\mathbb{Z}_{5}\oplus\mathbb{Z}_{30}$\\ \hline 141 & $(3, 6, 6, 6, 6)$ & $(2, 1, 1, 1, 1)$ & $\mathbb{Z}_{3}\oplus\mathbb{Z}_{6}\oplus\mathbb{Z}_{6}$ & $\mathbb{Z}_{3}\oplus\mathbb{Z}_{6}\oplus\mathbb{Z}_{6}\oplus\mathbb{Z}_{6}$\\ \hline   \end{tabular} \end{table}

\begin{table}[H]
\centering
\begin{tabular}{|c|c|c|c|c|}
\hline
\textbf{\textnumero} &
\textbf{$\mathbf{A}$} &
\textbf{$\mathbf{k}$} &
\textbf{$\mathbf{G_{\mathrm{adm}}^{\max}/J_A}$} &
\textbf{$\mathbf{G_{\mathrm{adm}}^{\max}}$}
\\ \hline
 142 & $(4, 4, 4, 5, 20)$ & $(5, 5, 5, 4, 1)$ & $\mathbb{Z}_{4}\oplus\mathbb{Z}_{4}$ & $\mathbb{Z}_{4}\oplus\mathbb{Z}_{4}\oplus\mathbb{Z}_{20}$\\ \hline 143 & $(4, 4, 4, 6, 12)$ & $(3, 3, 3, 2, 1)$ & $\mathbb{Z}_{2}\oplus\mathbb{Z}_{4}\oplus\mathbb{Z}_{4}$ & $\mathbb{Z}_{2}\oplus\mathbb{Z}_{4}\oplus\mathbb{Z}_{4}\oplus\mathbb{Z}_{12}$\\ \hline 144 & $(4, 4, 4, 8, 8)$ & $(2, 2, 2, 1, 1)$ & $\mathbb{Z}_{4}\oplus\mathbb{Z}_{4}\oplus\mathbb{Z}_{4}$ & $\mathbb{Z}_{4}\oplus\mathbb{Z}_{4}\oplus\mathbb{Z}_{4}\oplus\mathbb{Z}_{8}$\\ \hline 145 & $(4, 4, 5, 5, 10)$ & $(5, 5, 4, 4, 2)$ & $\mathbb{Z}_{10}$ & $\mathbb{Z}_{10}\oplus\mathbb{Z}_{20}$\\ \hline 146 & $(4, 4, 6, 6, 6)$ & $(3, 3, 2, 2, 2)$ & $\mathbb{Z}_{2}\oplus\mathbb{Z}_{2}\oplus\mathbb{Z}_{6}$ & $\mathbb{Z}_{2}\oplus\mathbb{Z}_{2}\oplus\mathbb{Z}_{6}\oplus\mathbb{Z}_{12}$\\ \hline 147 & $(5, 5, 5, 5, 5)$ & $(1, 1, 1, 1, 1)$ & $\mathbb{Z}_{5}\oplus\mathbb{Z}_{5}\oplus\mathbb{Z}_{5}$ & $\mathbb{Z}_{5}\oplus\mathbb{Z}_{5}\oplus\mathbb{Z}_{5}\oplus\mathbb{Z}_{5}$ \\ \hline   \end{tabular} 
 
\caption{Invariant factor decomposition of the groups $G_{\mathrm{adm}}^{\max}/J_A$ and $G_{\mathrm{adm}}^{\max}$ defining all Fermat-type Calabi–Yau threefolds.}
\label{tab:decompositions} 
  \end{table}

\end{document}